\documentclass[12pt]{article}
\usepackage[margin=1.0in]{geometry}

\usepackage{amsthm,amsmath,amssymb, amsfonts}
\usepackage{color}
\usepackage{graphicx}

\usepackage[T1]{fontenc}
\usepackage{charter}

\usepackage{mathtools}

\usepackage{setspace}
\onehalfspacing


\def\E{\mathbb{E}}
\def\bF{\mathbb{F}}

\def\bP{\mathbb P}

\def\R{\mathbb{R}}

\def\sA{{\mathcal A}}

\def\sF{{\mathcal F}}

\def\sI{{\mathcal I}}

\def\thup{\theta^\uparrow}          
\def\thupj{\theta^{(j) \uparrow}}   
\def\thdn{\theta^\downarrow}        
\def\thdnj{\theta^{(j) \downarrow}} 
\newcommand{\paren}[1] {^{(#1)}}
\newcommand{\sign}{\text{sign}}

\DeclareMathOperator*{\Argmax}{Arg\,max}

\numberwithin{equation}{section}
\theoremstyle{plain}                
\newtheorem{theorem}{Theorem}[section]
\newtheorem{lemma}[theorem]{Lemma}
\newtheorem{proposition}[theorem]{Proposition}
\newtheorem{corollary}[theorem]{Corollary}
\theoremstyle{definition}           
\newtheorem{definition}[theorem]{Definition}

\theoremstyle{remark}               

\begin{document}

\pagenumbering{arabic} \pagestyle{plain}
\begin{center}
  {\large \bf A multi-agent targeted trading equilibrium with transaction costs}
  \ \\ \ \\
Jin Hyuk Choi, Jetlir Duraj, and Kim Weston\footnote{The first author is supported by the National Research Foundation of Korea (NRF) grant funded by the Korea government (MSIT) (No. 2020R1C1C1A01014142 and No. 2021R1A4A1032924).  The second and third authors acknowledge support by the National Science Foundation under Grant No. DMS\#1908255 (2019-2022). Any opinions, findings and conclusions or recommendations expressed in this material are those of the authors and do not necessarily reflect the views of the National Science Foundation.}
\ \\

\today
\end{center}
\vskip .5in

\begin{abstract}
  We prove the existence of a continuous-time Radner equilibrium with multiple agents and transaction costs.  The agents are incentivized to trade towards a targeted number of shares throughout the trading period and seek to maximize their expected wealth minus a penalty for deviating from their targets.  Their wealth is further reduced by transaction costs that are proportional to the number of stock shares traded.  The agents' targeted number of shares is publicly known, making the resulting equilibrium fully revealing.  In equilibrium, each agent optimally chooses to trade for an initial time interval before stopping trade.  Our equilibrium construction and analysis involves identifying the order in which the agents stop trade.  The transaction cost level impacts the equilibrium stock price drift.  We analyze the equilibrium outcomes and provide numerical examples.
\end{abstract}

\noindent{\bf Keywords:} Transaction costs, Radner equilibrium, Targeted trading, TWAP, Trading frictions \\
\noindent{\bf MSC2020:} 91B24, 91B51\\

\section{Introduction}
We provide the first equilibrium existence result for a continuous-time Radner equilibrium with proportional transaction costs and an arbitrary, finite number of agents.  The agents seek to acquire (or sell) shares of a stock to achieve trading targets by the end of the trading period in a time-weighted average price (TWAP) fashion.  Their wealth declines with trade because of transaction costs that are proportional to the number of shares traded.  The agents seek to maximize their expected wealth minus a TWAP-related penalty term.  The agents' trading targets are publicly known, which leads to a fully revealing equilibrium.

Due to the inherent difficulty of studying models with proportional transaction costs, previous equilibrium existence results are limited in scope.  Previous existence results fall into two categories:  stylized models with only two agents and models that employ approximation or averaging to accomplish market clearing.  Two-agent economies are convenient in equilibrium because market clearing with two agents dictates that each agent must take opposite trades of the other.  Thus, two-agent economies lead to the study of only one agent's optimization problem.  Weston~\cite{W18MAFE}, Noh and Weston~\cite{NW21MAFE}, and Loewenstein and Qin~\cite{LQ21wp} proved equilibrium existence with proportional transaction costs in stylized models with only two agents.  Equilibria with transaction costs are so complicated that even approximate equilibrium results, such as Gonon et.\,al.~\cite{GMKS21MF} and Herdegen and Muhle-Karbe~\cite{HMK18FS}, crucially rely on a two-agent economy.  Continuum-of-agent models, where market clearing is averaged over infinitely many agents, are studied in Vayanos and Vila~\cite{VV99ET}, Vayanos~\cite{V99RES}, Huang~\cite{H03JET}, and D\'avila~\cite{D21JF}.  Lo et.\,al.~\cite{LMW04JPE} and Buss et.\,al.~\cite{BUV14wp} address equilibrium with transaction costs from numerical perspective and do not prove existence. Our model builds off the model in Noh and Weston~\cite{NW21MAFE}, and we prove equilibrium existence in an economy with proportional transaction costs and an arbitrary, finite number of agents.

Our approach to construct an equilibrium starts with a conjecture that is motivated by the trading behavior seen in Noh and Weston~\cite{NW21MAFE}.  We conjecture that each agent optimally chooses to trade for an initial time interval before ceasing trade without resuming.  The model of Noh and Weston~\cite{NW21MAFE} revealed monotonic trading behavior (always buying or always selling) and a simplification of the agents' first-order conditions.  In this work, our conjecture allows us to formulate candidate equilibrium quantities, including the stock's drift and the optimal trading strategies.  Having more than two agents leads to richer trading behavior and a more complicated stock price drift than Noh and Weston~\cite{NW21MAFE}.  Our proposed equilibrium stock price drift is not monotonic, leading to challenges in proving monotonic trading behavior.  Our analysis centers around proving crucial properties of the proposed equilibrium quantities.  In particular, we order the agents according to when they cease trading.  This {\it rank-based} ordering facilitates the definitions of all equilibrium-related quantities and is crucial for our verification and equilibrium existence arguments.

We show that having more than two agents affects the equilibrium stock price drift when transaction costs are proportional. Larger transaction cost proportions lead to less trading, which leads to changes in the stock price drift.  Our results contrast Noh and Weston~\cite{NW21MAFE}, who found no impact of transaction costs on the stock price drift because of simplifications due to a two-agent economy.  Our results share similarities with the two-agent, ergodic-style equilibrium with transaction costs of Gonon et.\,al.~\cite{GMKS21MF}.  In Gonon et.\,al.~\cite{GMKS21MF}, the stock price drift has a larger range of values as transaction costs increase due to an increase in range of an underlying doubly-reflected Brownian motion state process.  We provide analytic results to describe the changes in the equilibrium drift due to transaction costs.

The outline of our paper is as follows.  In Section~\ref{section:set-up}, we set up our model and state our main result on equilibrium existence, Theorem~\ref{thm:existence-1}.  Section~\ref{section:ex} provides an illustrative example that motivates our definition for a rank-based ordering of the agents.  We construct the rank-based ordering and define several equilibrium-related quantities in Section~\ref{section:definitions}.  Using the rank-based ordering, we also refine and restate our main result as Theorem~\ref{thm:existence-2}.  In Section~\ref{section:proofs}, several properties are proven leading up to the proof of Theorem~\ref{thm:existence-2}.  Finally, Section~\ref{section:outcomes} analyzes the outcomes of equilibrium and provides numerical examples.

\section{Model set-up}\label{section:set-up}

Let $T>0$ be a fixed time horizon, which we think of as one trading day in length.  We work in a continuous-time setting and let $B=\left(B_t\right)_{t\in[0,T]}$ be a Brownian motion on a probability space $(\Omega,\sF,\bP)$.  The market consists of two traded securities:  a bank account and a stock.  The bank account is a financial asset in zero-net supply with a constant zero interest rate.  The stock is in constant net supply with the supply denoted by $n\geq0$.  It pays a dividend of $D$ at time $T$.  The random variable $D$ is measurable with respect to $\sigma\left(B_u\!:\, 0\leq u\leq T\right)$ and $\E[D^2]<\infty$.  We let $\sigma=(\sigma_t)_{t\in[0,T]}$ be the progressively measurable process such that $\E\int_0^T\sigma_{u}^2du<\infty$ and
\begin{equation}\label{eqn:D}
  D= \E[D] + \int_0^T \sigma_{u}dB_u
\end{equation}
guaranteed by the martingale representation theorem.  The equilibrium stock price is an It\^o process that will be determined endogenously in equilibrium and is denoted $S = (S_t)_{t\in[0,T]}$.  The price is constrained at time $T$ so that $S_T = D$.  We assume that prices of all goods are denominated in units of a single consumption good.

A finite number of investors, $i=1,\ldots, I$, trade in the market.   They each seek to maximize expected wealth yet are subjected to inventory penalties throughout the trading period.  Their wealth is further penalized by transaction costs, which are proportional to the rate of trade at the rate $\lambda>0$.  The $I=2$ case is studied in Noh and Weston~\cite{NW21MAFE}.  We are mainly interested in the case when $I\geq 3$.

Each agent $i$ has a targeted number of shares $\tilde a_i$ she wishes to acquire (or sell off) throughout the trading period.  The random variables $\tilde a_1, \ldots, \tilde a_I$ are assumed to satisfy $\E[\tilde a_i^2]<\infty$, $i=1,\ldots, I$, and be independent of the Brownian motion $B$.  The trading targets $\tilde a_1, \ldots, \tilde a_I$ are assumed to be known to all investors at time $0$.  The filtration is given by $\bF= (\sF_t)_{t\in[0,T]}$ where
$$
  \sF_t := \sigma\left(\tilde a_1, \ldots, \tilde a_I, B_u:\, u\in[0,t]\right), \quad t\in[0,T],
$$
and we assume that $\sF=\sF_T$.

  A trading strategy $\theta = (\theta_t)_{t\in[0,T]}$ denotes the number of shares held in the stock.  For $1\leq i\leq I$, we say that $\theta$ is \textit{admissible for agent $i$} if it is adapted to $\bF$, c\`adl\`ag, of finite variation on $[0,T]$ $\bP$-a.s., and satisfies $\E\int_0^T \left(\sigma_{t}\theta_t\right)^2dt <\infty$ and $\E\int_0^T\theta_t^2dt<\infty$.  We write $\sA_i$ to denote the collection of admissible strategies for agent $i$.  Agent $i$ is endowed at the beginning of the trading period with $\theta_{i,0-}$ shares of stock, where $\theta_{i,0-}$ is deterministic (constant).  We normalize the endowed shares in the bank account to zero.  For $\theta\in\sA_i$, we allow for $\theta_0$ to differ from $\theta_{i,0-}$, as the agents may choose to trade a lump sum immediately.
  
  For $1\leq i\leq I$, since $\theta\in\sA_i$ is of finite variation, we decompose $\theta$ into 
  \begin{equation}\label{eqn:decomp}
    \theta_t = \theta_{i,0-} + \theta^\uparrow_t - \theta^\downarrow_t, \ \ \ \ t\in[0,T],
  \end{equation}
  where $\theta^\uparrow, \theta^\downarrow$ are adapted to $\bF$, c\`adl\`ag, nondecreasing, and 
\begin{equation}\label{eqn:trade-condition}
  \{t\in[0,T]:d\theta^\uparrow_t>0\}\cap\{t\in[0,T]:d\theta^\downarrow_t>0\}=\emptyset.
\end{equation}
  A change in trading position is possible at time $0$, and we allow for $\theta^\uparrow_0>0$ or $\theta^\downarrow_0>0$ as long as~\eqref{eqn:trade-condition} holds.

At the close of the trading period, the agents consume their acquired dividends.  They are  subjected through their optimization problems to inventory penalties throughout the trading period.  For $1\leq i\leq I$ and a given $\theta\in\sA_i$, the penalty term, or loss term, for agent $i$ is measured by
\begin{equation}\label{def:L}
  L_{i,T}^\theta:= \frac12\int_0^T \kappa(t)\big(\gamma(t)\left(\tilde a_i-\theta_{i,0-}\right) - \left(\theta_t-\theta_{i,0-}\right)\big)^2dt.
\end{equation}
The function $\kappa:[0,T]\rightarrow (0,\infty)$ describes the intensity of the penalty, while $\gamma:[0,T]\rightarrow[0,1]$ describes the desired intraday trading target trajectory.  All agents share the same deterministic functions $\kappa$ and $\gamma$.  We assume that $\gamma$ is continuous, strictly increasing, $\gamma(0)=0$, and $\gamma(T)=1$.  Our main example is time-weighted average price (TWAP), where the intraday trajectory function is $\gamma^{\text{TWAP}}(t):= t/T$.  We assume that $\kappa$ is strictly positive and continuous.

For $1\leq i\leq I$ and a trading strategy $\theta\in\sA_i$, agent $i$'s wealth at $t$ is given by
\begin{equation}\label{def:wealth}
  X^\theta_{i,t} = \theta_{i,0-}S_{0} + \int_{0}^t \theta_u dS_{u} - \lambda\left(\theta^\uparrow_t + \theta^\downarrow_t\right),
  \quad t\in[0,T].
\end{equation}
We recall that the decomposition of $\theta$ in \eqref{eqn:decomp} allows for $\theta^\uparrow_0$ and $\theta^\downarrow_0$ to differ from zero.  Agent $i$'s objective is
$$
  \E\left[X^\theta_{i,T} - L^\theta_{i,T}\,\vert\,\sF_{0}\right] \longrightarrow \ \text{max}
$$
over $\theta\in\sA_i$, where $L^\theta_{i,T}$ is defined in~\eqref{def:L} and $X^\theta_{i,T}$ in~\eqref{def:wealth}.

In an equilibrium, the stock price is determined so that markets clear when all agents  invest optimally.
\begin{definition} \label{def:eq}
Let $\lambda>0$ be a given transaction cost proportion.  An It\^o process $S = \left(S_t\right)_{t\in[0,T]}$ and trading strategies $\theta_1\in\sA_1, \ldots, \theta_I\in\sA_I$ form an \textit{equilibrium} if
\begin{enumerate}
  \item \textit{Strategies are optimal:} For $1\leq i\leq I$, we have that
  $$
    \E\left[X^{\theta_i}_{i,T} - L^{\theta_i}_{i,T}\,\vert\,\sF_{0}\right] 
    = \sup_{\theta\in\sA_i}\E\left[X^\theta_{i,T} - L^\theta_{i,T}\,\vert\,\sF_{0}\right],
  $$
  where $L^\theta_{i,T}$ is defined in~\eqref{def:L}, $X^\theta_{i,T}$ in~\eqref{def:wealth}, and the stock dynamics correspond to $S$ with $S_T=D$.
  \item \textit{Markets clear:} We have $\sum_{i=1}^I\theta_{i,t} = n$ for all $t\in[0,T]$.
\end{enumerate}
\end{definition}

Market clearing in Definition~\ref{def:eq}(2) requires clearing of the stock market, however Walras' Law holds in our model in that the other markets (money market and real goods) clear as well.  We omit the proof here, and instead refer the reader to the analogous result in Noh and Weston~\cite{NW21MAFE} (Lemma 3.2).

Theorem \ref{thm:existence-1} is the main result of the paper.  
\begin{theorem}\label{thm:existence-1}
  For 
  any transaction cost proportion $\lambda>0$, there exists an equilibrium.
\end{theorem}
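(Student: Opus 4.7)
The plan is to follow the conjecture sketched in the introduction: build a candidate equilibrium in which each agent $i$ trades monotonically on an initial interval $[0,\tau_i]$ and is inactive on $[\tau_i,T]$, rank the agents by the stopping times $\tau_i$, and then verify the candidate. I would organise the argument in three stages: (i) characterise the optimum of a single agent's problem for an arbitrary It\^o price via first-order conditions; (ii) construct a candidate $(S,\theta_1,\ldots,\theta_I)$ slab by slab using the FOCs together with market clearing; and (iii) verify optimality and admissibility.

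\emph{Stage (i).} For a fixed It\^o price $S$ with drift $\mu$ and $S_T=D$, define the marginal value of holding one extra share from time $t$ onward,
\begin{equation*}
  M^\theta_{i,t}:=\E\!\left[\int_t^T\!\mu_u\,du+\int_t^T\!\kappa(u)\bigl(\gamma(u)(\tilde a_i-\theta_{i,0-})-(\theta_u-\theta_{i,0-})\bigr)du\,\Big|\,\sF_t\right].
\end{equation*}
A G\^ateaux derivative in the direction of an admissible perturbation of $\theta$, using $S_T=D$ and integration by parts on $\int_0^T\theta_u\,dS_u$, shows that optimality is equivalent to the Kuhn--Tucker conditions $M^\theta_{i,t}\in[-\lambda,\lambda]$ on $[0,T]$, with $M^\theta_{i,t}=\lambda$ on $\{d\theta^\uparrow>0\}$ and $M^\theta_{i,t}=-\lambda$ on $\{d\theta^\downarrow>0\}$. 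Because the objective is strictly concave in $\theta$ (strict concavity comes from the quadratic inventory penalty, and the transaction-cost term is concave), these conditions are both necessary and sufficient.

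\emph{Stage (ii).} Order the $I$ agents by their still-to-be-defined stopping times $\tau_1\le\cdots\le\tau_I$ and assign each a trading direction $\epsilon_i\in\{+1,-1\}$. On each slab $(\tau_{k-1},\tau_k]$ the active agents are those with index $\ge k$; for them the FOC $M^{\theta_i}_{i,t}=\epsilon_i\lambda$ must hold continuously, while inactive agents hold constants. Differentiating the FOCs in $t$ and combining with the differentiated market-clearing identity $\sum_{i\text{ active}}\dot\theta_{i,t}=0$ yields a finite-dimensional linear system that determines the candidate drift $\mu_t$ and each active trading rate on the slab. Defining $\tau_k$ as the first time at which the FOC for the $k$-th agent can no longer be sustained gives an inductive construction of the slabs (including possibly $\tau_i=0$ handled as an initial lump-sum trade); the It\^o price $S$ is then recovered from $\mu$ and the constraint $S_T=D$ together with \eqref{eqn:D}.

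\emph{Stage (iii) and the main obstacle.} The main difficulty, explicitly flagged in the introduction, is verifying that the candidate $\theta_i$ is globally optimal, i.e.\ that $M^{\theta_i}_{i,t}\in[-\lambda,\lambda]$ for \emph{all} $t\in[0,T]$ and not only on the slab where its FOC was imposed. When $I\ge3$ the candidate $\mu$ is not monotone, so monotonicity of $M^{\theta_i}_i$ is not automatic and must be established by a slab-by-slab analysis exploiting the rank ordering, the strict positivity and continuity of $\kappa$, and the strict monotonicity of $\gamma$. Concretely, after agent $i$ exits at $\tau_i$ one must show that $M^{\theta_i}_i$ retreats strictly into the interior of $[-\lambda,\lambda]$ and stays there, using the fact that the change in the composition of the active set at $\tau_i$ pushes $\mu$ away from the value that would keep agent $i$ at the active boundary; and on $[0,\tau_i)$ one must rule out excursions to the opposite boundary. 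The remaining items --- admissibility ($\E\int_0^T\theta_i^2\,dt$ and $\E\int_0^T(\sigma\theta_i)^2\,dt$ bounds using the $L^2$ hypotheses on $\tilde a_i$ and $D$), consistency of the sign choices $\epsilon_i$ with $\tilde a_i-\theta_{i,0-}$, and pointwise market clearing --- are either built into the construction or follow routinely. Once the verification step is in place, strict concavity upgrades the FOC to optimality of each $\theta_i$, which together with market clearing delivers the equilibrium of Definition~\ref{def:eq}.
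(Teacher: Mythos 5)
Your high-level strategy matches the paper's: conjecture the stop-trade structure, rank agents by exit time, build $\mu$ and the $\theta_i$ from the first-order conditions plus market clearing, and verify via a concavity/KKT argument. Your marginal-value process $M^\theta_{i,t}$ is exactly the paper's $Y\paren{j}$ (see \eqref{eqn:Y-representation}), and your Stage (iii) sufficiency claim is what Proposition~\ref{prop:theta_optimal} proves by a direct computation. However, the proposal leaves open precisely the two steps that constitute the paper's actual proof, and in both cases the gap is substantive rather than routine.

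First, your Stage (ii) construction is circular as stated. You define $\tau_k$ as ``the first time at which the FOC for the $k$-th agent can no longer be sustained,'' but which agent is ``the $k$-th'' is exactly what must be determined, and whether an agent's FOC can be sustained at time $t$ depends on the future path of $\mu$, which in turn depends on who remains active after $t$. A forward, slab-by-slab recursion therefore does not close. The paper resolves this with a \emph{backward} induction: it first identifies the last pair of agents to exit by maximizing the pairwise times $\eta\paren{I-1}_{i,l}$ in \eqref{def:eta-last}, then peels off one agent at a time via the Argmax in \eqref{def:sigma}--\eqref{various:def}, each stage using only quantities already defined for later ranks. (Ties in the Argmax then require the separate uniqueness argument for $\mu$ in Section~\ref{section:outcomes}.) Second, your Stage (iii) names the right obstacle but replaces the proof with a heuristic. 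The claim that after agent $i$ exits ``the change in the composition of the active set pushes $\mu$ away from the value that would keep agent $i$ at the active boundary'' is not the mechanism the paper uses and is not obviously true, since $\mu$ is non-monotone. The paper's Theorem~\ref{thm:tau-ordering} instead proves $\left|Y\paren{j}\right|\le\lambda$ jointly with the ordering $\tau\paren{1}\le\cdots\le\tau\paren{I}$ by backward induction, and on $(\tau\paren{j},T)$ it controls the \emph{normalized} quantity $Y\paren{j}_t/\int_t^T\kappa(u)\,du$ through a derivative bound (\eqref{scaled_Y_prime_ineq}, reducing to \eqref{suff-cond-3}) that is closed by the very definition of $\tau\paren{j}$ as an infimum together with the representation of $\sum_k \Gamma\paren{k}/(I-k)$ in Lemma~\ref{lemma:Y-representation}. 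Without this ordering-plus-bound theorem (and the monotonicity of $\theta\paren{j}$ in Proposition~\ref{cor:theta-monotone}, needed to identify $\theta^{\uparrow}$, $\theta^{\downarrow}$ and hence the reflection conditions \eqref{eqn:reflection}), the verification step cannot be completed, so the proposal as written does not yet constitute a proof.
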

Section~\ref{section:ex} motivates our equilibrium construction, and Sections~\ref{section:definitions} and \ref{section:proofs} provide the details on the equilibrium construction, its properties, and the proof of Theorem~\ref{thm:existence-1}.

\section{An illustrative example}\label{section:ex}
Motivated by the trading behavior in Noh and Weston~\cite{NW21MAFE}, we formulated the conjecture that for every agent $1\leq i\leq I$, there exists an $\sF_0$-measurable stop-trade time $\tau_i$ such that agent $i$ trades monotonically on $[0, \tau_i]$ and does not trade on $(\tau_i, T]$.  In Sections~\ref{section:definitions} and \ref{section:proofs} we construct equilibrium quantities based on this conjecture and prove that the conjecture holds.  The construction and proof require heavy notation and that several intricate properties hold.  The example in this section provides evidence that the simple idea behind our conjecture is reasonable.

We consider an example with $I=20$ agents, trading day length $T=1$, transaction cost proportion $\lambda = 0.2$, and functions $\kappa(t) = 0.1$ and $\gamma(t) = t$, $t\in[0, 1]$.  We choose the stock supply to be $n=0$ and the dividend as $D$ any random variable with $\E[D^2]<\infty$ and $\E[D]=0$.  The agents are endowed with $\theta_{1,0-} = \ldots = \theta_{I,0-} = 0$ shares initially and have trading targets given by
\begin{align}
  \tilde a_1 &= -300 & \tilde a_6 &= -60 & \tilde a_{11}&=6 & \tilde a_{16}&=70 \nonumber\\
  \tilde a_2 &= -202 & \tilde a_{7} &= -35 & \tilde a_{12}&=11 & \tilde a_{17}&=115 \nonumber\\
  \tilde a_3 &= -165 & \tilde a_{8} &= -20 & \tilde a_{13}&=23 & \tilde a_{18}&=150 \label{ai_data}\\
  \tilde a_4 &= -102 & \tilde a_{9} &= -15 & \tilde a_{14}&=30 & \tilde a_{19} &=220 \nonumber\\
  \tilde a_5 &= -75 & \tilde a_{10} &= 0 & \tilde a_{15}&=63 & \tilde a_{20} &=290. \nonumber
\end{align}  

We conjecture that for each $1\leq i \leq I=20$, there exists an $\sF_0$-measureable stop-trade time $\tau_i$ such that agent $i$ trades on $[0,\tau_i]$ and does not trade at times $(\tau_i, T]$.  Using this conjecture, Figure~\ref{figure:trade-times} plots the trade intervals of each agent.
\begin{figure}[h]
\begin{center}
\includegraphics[scale=0.8]{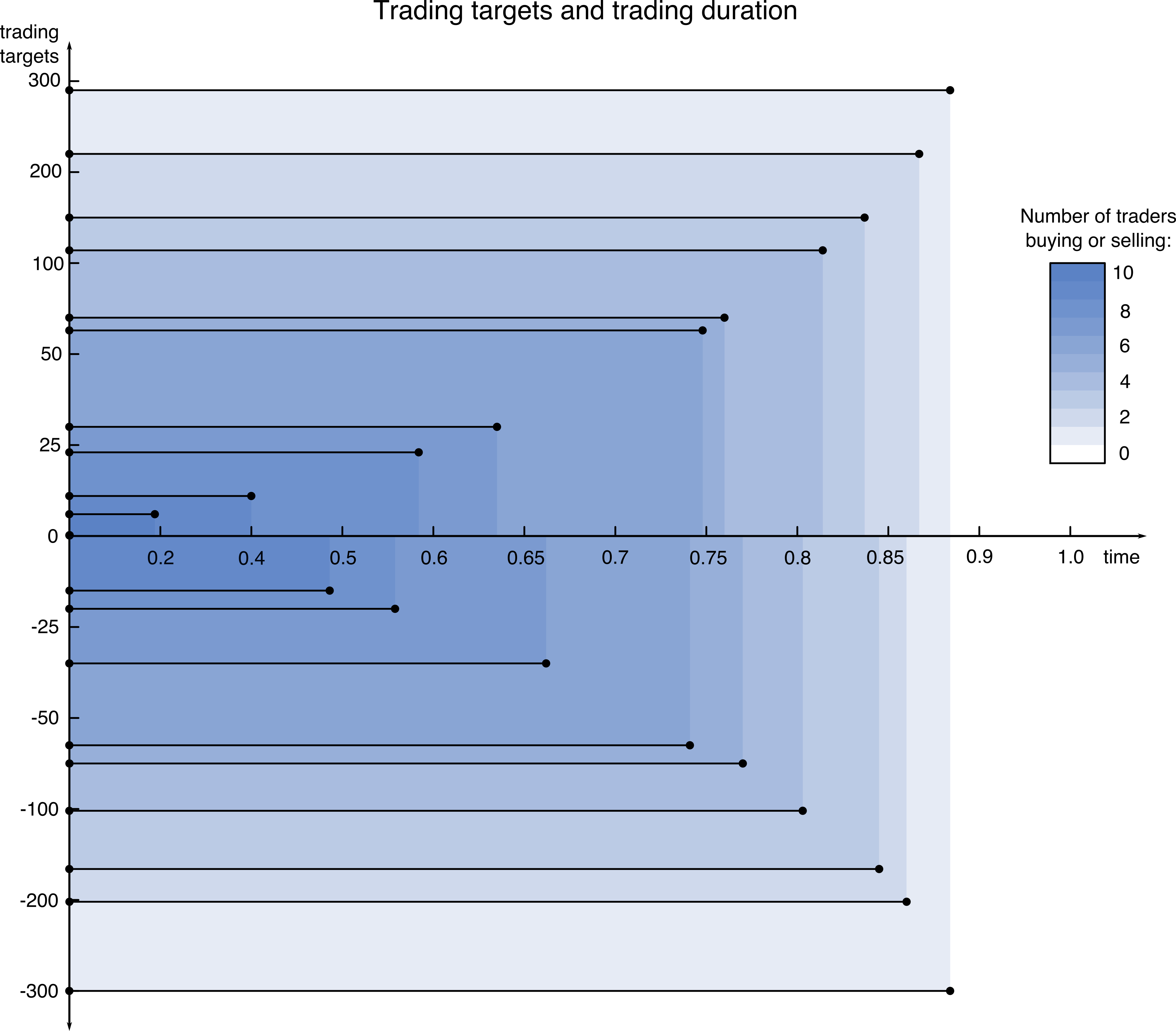}
\end{center}
  \caption{The vertical axis plots the agents' trading targets, while the corresponding length of time trading is plotted on the horizontal axis.  The parameters are $I=20, \, T=1, \, \lambda=0.2,\, n=0,\, \theta_{i,0-}=0$ for $1\leq i \leq I$, $\kappa(t)=0.1$ and $\gamma(t)=\gamma^{\text{TWAP}}(t)=t/T$ for $t\in [0,T]$, and $(\tilde a_i)_{1\leq i \leq I}$ is given in \eqref{ai_data}.  Both the vertical and horizontal axes are scaled to better visualize the agents' behavior.
 }
 \label{figure:trade-times}
\end{figure}

Based on Figure~\ref{figure:trade-times}, the stop-trade times for the agents appear to be ordered, as those agents with targets that are closest to ``average'' target stop trading sooner than the agents with more extreme targets. Agent $10$ with $\tilde a_{10}=0$ chooses never to trade, while the more extreme targets choose to trade for a longer duration. Of course, what is meant by ``average'' must be determined.  It is possible to describe the ordering of the stop-trade times based on an idea of relevant averages, and this construction is detailed in Section~\ref{section:definitions}. We also notice that the last two agents to stop trade are agents $1$ and $20$, who choose to stop trading at the same time, in order to satisfy the market clearing condition.

A necessary property of the stop-trade time conjecture is that trading occurs in a monotone fashion --  an optimizing agent that starts the trading period by buying shares of stock should not want to sell shares of stock later on.  The equilibrium stock price drift appears linearly in the optimal trading strategy formulas and, thus, plays a large role in the monotonicity of the strategies.  Figure~\ref{figure:drift} plots the equilibrium stock price drift.

\begin{figure}[h]
  \begin{center}
  \includegraphics[scale=0.8]{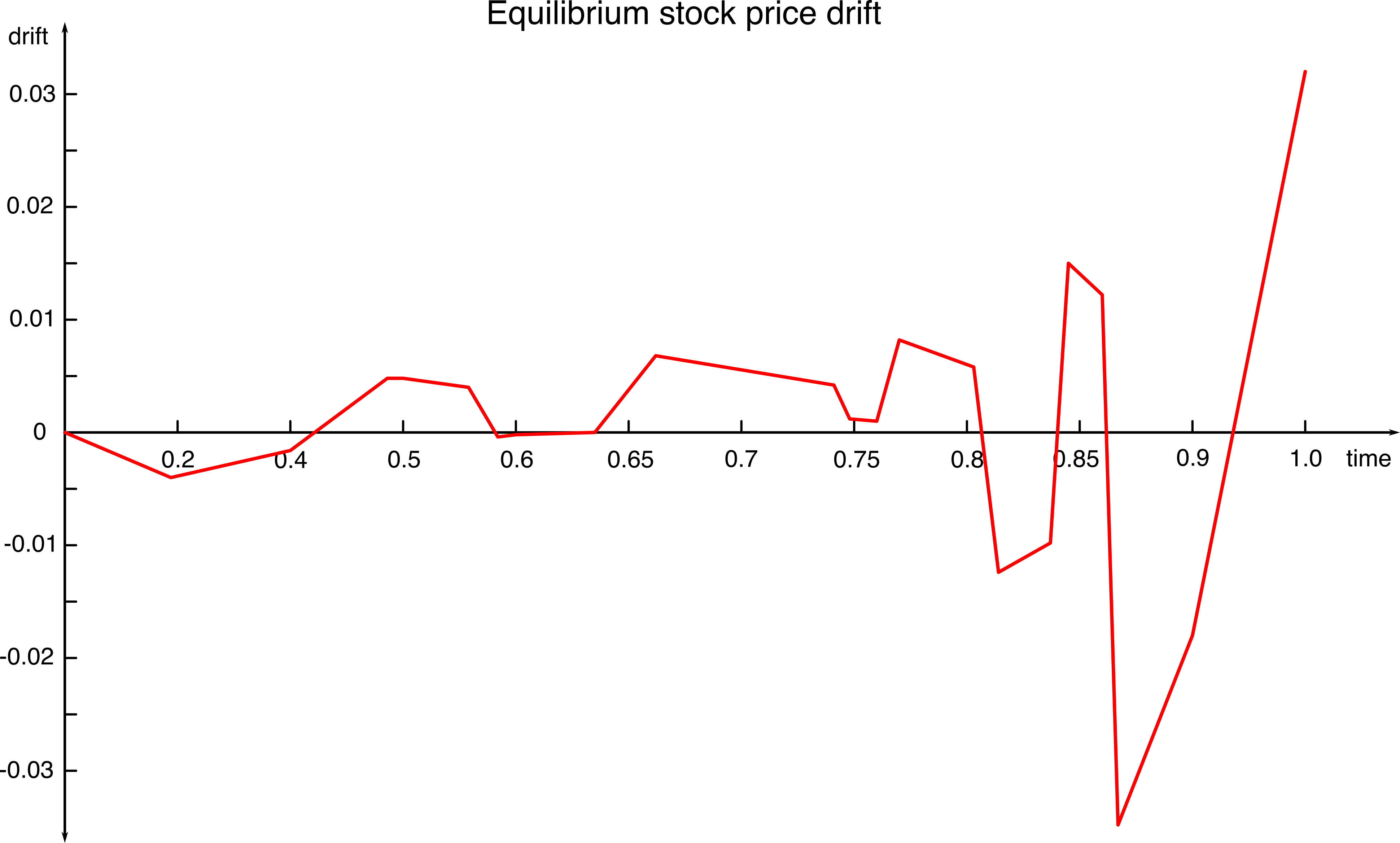}
  \end{center}
  \caption{The equilibrium stock drift is plotted as a function of time. 
The parameters are $I=20, \, T=1, \, \lambda=0.2,\, n=0,\, \theta_{i,0-}=0$ for $1\leq i \leq I$, $\kappa(t)=0.1$ and $\gamma(t)=\gamma^{\text{TWAP}}(t)=t/T$ for $t\in [0,T]$, and $(\tilde a_i)_{1\leq i \leq I}$ is given in \eqref{ai_data}. The horizontal axis is scaled to more suitably visualize the changes in the equilibrium stock drift.
}
\label{figure:drift}
\end{figure}

We see from Figure~\ref{figure:drift} that the equilibrium stock price drift is not necessarily monotone, which creates a challenge for the monotonicity of the optimal trading strategies.  Nevertheless, Figure~\ref{figure:strategies} plots a subset of the optimal trading strategies, which are monotone.

\begin{figure}[h]
\begin{center}
  \includegraphics[scale=0.8]{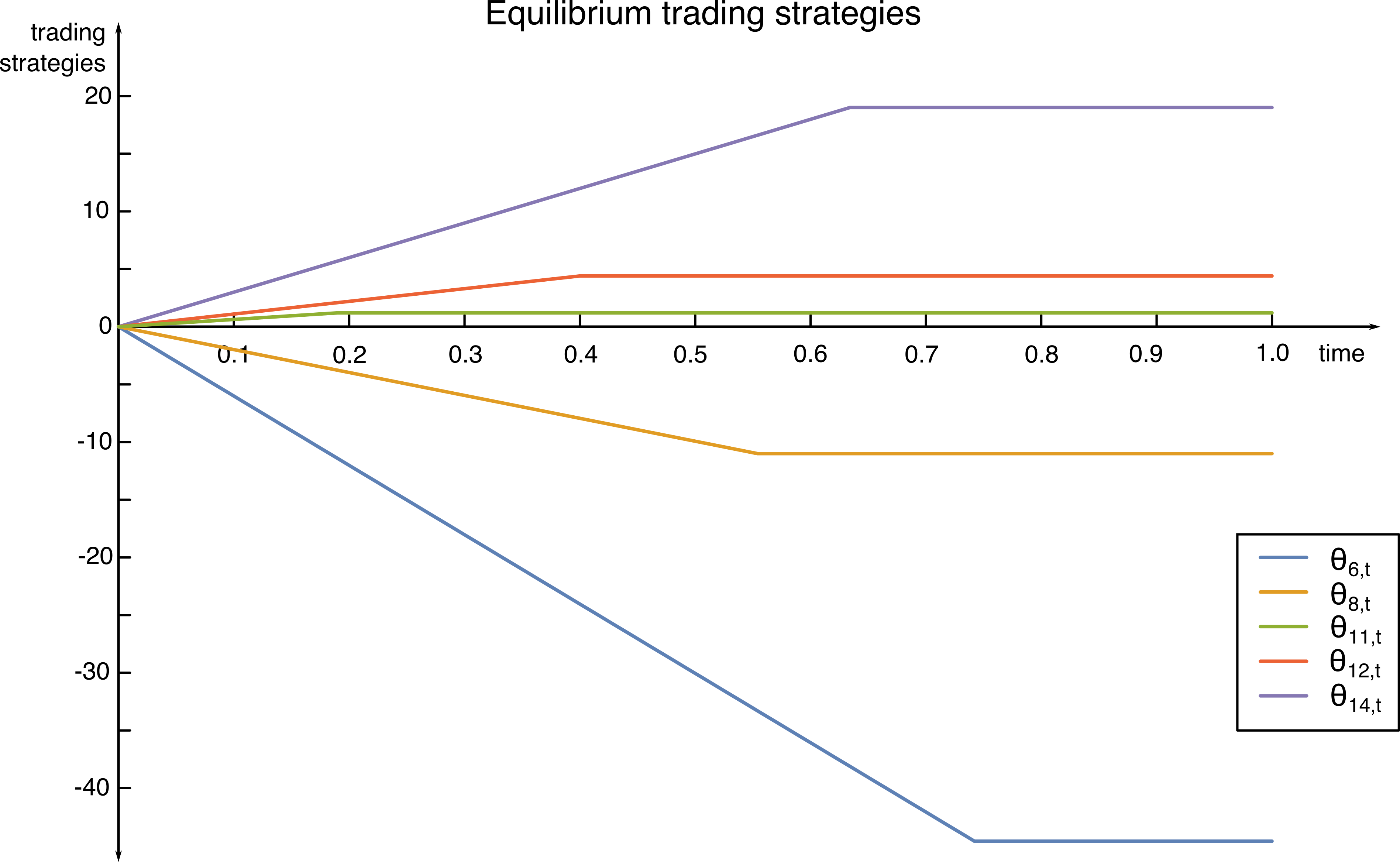}
\end{center}
\caption{The equilibrium trading strategies are plotted as functions of time for agents $6$, $8$, $11$, $12$, and $14$. The parameters are $I=20, \, T=1, \, \lambda=0.2,\, n=0,\, \theta_{i,0-}=0$ for $1\leq i \leq I$, $\kappa(t)=0.1$ and $\gamma(t)=\gamma^{\text{TWAP}}(t)=t/T$ for $t\in [0,T]$, and $(\tilde a_i)_{1\leq i \leq I}$ is given in \eqref{ai_data}.
}
\label{figure:strategies}
\end{figure}

\section{Formal definitions and the rank-based ordering}\label{section:definitions}
Up until now, the agents are labeled in terms of an {\it alphabetical} ordering with index $i\in\{1,\ldots, I\}$ appearing in subscripts.  However, it will be convenient to introduce an alternative {\it rank-based} ordering, which we denote by superscripts in parenthesis. We seek to order the times that agents cease trading by
$$
  0 \leq \tau\paren{1}\leq \ldots\leq \tau\paren{I-1}=\tau\paren{I}<T.
$$
The time $\tau\paren{j}$ represents the $\sF_0$-measurable time when a total of $j$ out of the $I$ agents have stopped trading.  The superscript $(j)$ denotes the $j^{th}$ agent to have dropped out of the market, whereas the subscript $i$ in $\tau_i$ denotes the $\sF_0$-measurable time when agent $i$ has chosen to stop trading.  If agent $i$ is the $j^{th}$ agent to stop trading, then $\tau_i = \tau\paren{j}$.  We will see that it is possible for $\tau\paren{j}=\tau\paren{j+1}$, meaning two agents choose to stop trade at the same time.  Also, we always have $\tau\paren{I-1}=\tau\paren{I}$ since the last two remaining active traders must cease trade at the same time for stock market clearing to hold.

We define the rank-based ordering by backward induction starting with $(I-1)$ and $(I)$.  
In this construction, we will give meaning to the terms: $a\paren{j}$, $a\paren{\geq j}_\Sigma$, and $A\paren{j}$ for trading targets and their aggregate target counterparts; $\tau\paren{j}$ for stop-trade times; and $\sI\paren{j}$ for index sets to translate between alphabetic and rank-based ordering.  After defining these quantities, we define the proposed equilibrium quantities:  $Y\paren{j}$ for candidate optimal first-order conditions; $\theta\paren{j}$ for the candidate optimal trading strategies; and $\mu$ for the stock drift.
\ \\

For convenience, we define $F:[0,T] \to \R$ as
\begin{align}
F(t):=\int_t^T\kappa(u)\big(\gamma(u)-\gamma(t)\big)du. \label{F_def}
\end{align}
%
For $1\leq i\leq I$, we define the relative trading targets by $a_i := \tilde a_i - \theta_{i,0-}$.  The relative trading target $a_i$ represents how many shares agent $i$ must obtain (or sell), relative to her initial position.  As we'll see below, the terms $a_i$ are more relevant to the rank-based definitions and results than the targets $\tilde a_i$.

\noindent{\bf Base case: $j\in\{I-1,I\}$.} We define the initial index set by $\sI\paren{I-1}:=\{1,\ldots, I\}$.  For $i,l\in\sI\paren{I-1}$, we put
\begin{equation}\label{def:eta-last}
  \eta\paren{I-1}_{i,l} := \inf\left\{ t\in[0,T]: \left|\left(a_i-\frac12(a_l+a_i)\right)F(t)\right| \leq\lambda\right\}.
\end{equation}
We choose $i^*, l^*$ to maximize $\eta\paren{I-1}_{i,l}$ so that
$$
  (i^*,l^*) \in\Argmax\left\{\eta\paren{I-1}_{i,l}:\ i,l\in\sI\paren{I-1}\right\}.
$$

We put
\begin{align*}
  \tau\paren{I-1}&:=\tau\paren{I}:=\eta\paren{I-1}_{i^*,l^*}, \\
  \tilde a\paren{I-1}&:=\tilde a_{i^*}, & \tilde a\paren{I}&:=\tilde a_{l^*}, \\
  \theta\paren{I-1}_{0-} &:= \theta_{i^*, 0-}, & \theta\paren{I}_{0-} &:= \theta_{l^*, 0-},\\
  a\paren{I-1}&:= a_{i^*} =  \tilde a\paren{I-1} - \theta\paren{I-1}_{0-}, &  a\paren{I}&:= a_{l^*} = \tilde a\paren{I} - \theta\paren{I}_{0-}, \\
  a\paren{\geq I-1}_\Sigma &:= a\paren{I-1}+a\paren{I},\\
  A\paren{I-1} &:= a\paren{I-1}-\frac12 a\paren{\geq I-1}_\Sigma,  
  & A\paren{I}&:=a\paren{I}-\frac12a\paren{\geq I-1}_\Sigma = -A\paren{I-1}.
\end{align*}

We call the values $\tau\paren{I-1}$ and $\tau\paren{I}$ stop-trade times, which are the $\sF_0$-measurable times that the rank-based agents $(I-1)$ and $(I)$ stop trading.  We also denote agent $(I-1)$'s and $(I)$'s set of admissible strategies by $\sA\paren{I-1}:=\sA_{i^*}$ and $\sA\paren{I}:=\sA_{l^*}$. Finally, we put $\sI\paren{I-2}:=\sI\paren{I-1}\backslash\{i^*, l^*\}$, which will set us up for the next iteration of this recursive definition by removing alphabetic indices that have already been accounted for.
\ \\

\noindent{\bf Backward induction: } Let $j$ be given with $1\leq j\leq I-2$. We assume that the following have been previously defined:  $\sI\paren{j},\ldots,\sI\paren{I-1}$; $\tau\paren{j+1},\ldots, \tau\paren{I}$; $a\paren{j+1},\ldots, a\paren{I}$; $a\paren{\geq j+1}_\Sigma, \ldots, a\paren{\geq I-1}_\Sigma$; and $A\paren{j+1}, \ldots, A\paren{I-1}$. For $ i\in\sI\paren{j}$, we define
\begin{align}
  \eta_i\paren{j} := \inf&\left\{t\in[0,T]:\ \left|\frac{I-j+1}{I-j}\left(a_i-\frac{1}{I-j+1}\left(a_i+a\paren{\geq j+1}_\Sigma\right)\right)F(t) \right.\right. \nonumber\\
  &\left.\left.\quad\quad\quad\quad\quad\quad\quad\quad\quad\quad\quad\quad\quad\quad\quad\quad + \sum_{k=j+1}^{I-2} \frac{A\paren{k}}{I-k} \, F(t\vee\tau\paren{k})\right|\leq\lambda\right\}. \label{def:sigma}
\end{align}
In the definition of $\eta\paren{j}_i$ and throughout this work, we allow for the possibility that summations are empty.  When the upper summation index is strictly less than the lower index, our convention is that the summation is empty and zero-valued.

We choose $i^*$ to maximize $\eta\paren{j}_i$ by
\begin{align}
  i^*\in\Argmax\left\{\eta\paren{j}_i:\ i\in\sI\paren{j}\right\}, \label{choose_j}
\end{align}
and define 
\begin{equation}
\begin{split}\label{various:def}
  \tau\paren{j}&:=\eta\paren{j}_{i^*},\\
  \tilde a\paren{j}&:= \tilde a_{i^*},\\
  \theta\paren{j}_{0-} &:= \theta_{i^*, 0-},\\
  a\paren{j} &:= \tilde a\paren{j}-\theta\paren{j}_{0-},\\
  a\paren{\geq j}_\Sigma &:=a\paren{j}+a\paren{\geq j+1}_\Sigma,\\
  A\paren{j}&:= a\paren{j}-\frac{1}{I-j+1}a\paren{\geq j}_\Sigma,\\
  \sI\paren{j-1}&:=\sI\paren{j}\backslash\{i^*\}.
\end{split}
\end{equation}
We call $\tau\paren{j}$ a stop-trade time since it is the $\sF_0$-measurable time that the rank-based agent $(j)$ stops trading.  We denote agent $(j)$'s set of admissible trading strategies by $\sA\paren{j}:=\sA_{i^*}$. \\

\noindent{\bf Other rank-based and equilibrium quantities:} 
The agents' penalty terms, wealth processes, and optimization problems were previously defined using the {\it alphabetic} notation, and now we rewrite these definitions using the {\it rank-based} notation.  For $1\leq j \leq I$ and $\theta\in\sA\paren{j}$,  agent $(j)$'s penalty term associated with strategy $\theta$ is given by
\begin{equation*}
  L_{T}^{(j),\theta}:= \frac12\int_0^T \kappa(t)\left(\gamma(t)\left(\tilde a\paren{j}-\theta\paren{j}_{0-}\right) - \left(\theta_t-\theta\paren{j}_{0-}\right)\right)^2dt,
\end{equation*}
and agent $(j)$'s wealth process for strategy $\theta$ is
\begin{equation*}
  X^{(j),\theta}_{t} = \theta\paren{j}_{0-}S_{0} + \int_{0}^t \theta_u dS_{u} - \lambda\left(\theta^\uparrow_t + \theta^\downarrow_t\right),
  \quad t\in[0,T].
\end{equation*}
Finally, agent $(j)$ seeks to optimize
\begin{equation}\label{eqn:optimality}
  \E\left[X^{(j),\theta}_{T} - L^{(j),\theta}_{T}\,\vert\,\sF_{0}\right] \longrightarrow \ \text{max}
\end{equation}
over $\theta\in\sA\paren{j}$.  When rewriting the definitions of the penalty terms, wealth processes, and optimization problems, we have not changed any content beyond the agents' notational labels.  The definition of equilibrium, Definition~\ref{def:eq}, is also readily adapted to the rank-based notation.

For notational purposes, we define $\tau\paren{0}:=0$.  The candidate equilibrium stock price drift $\mu$ is defined by
\begin{equation}\label{def:mu}
  \mu_t := 
  \begin{dcases}
    -\kappa(t)\left(\frac{\gamma(t)a\paren{\geq j+1}_\Sigma}{I-j}+\sum_{k=1}^{j}\frac{\gamma(\tau\paren{k})A\paren{k}}{I-k}\right), &t\in[\tau\paren{j},\tau\paren{j+1}), \ 0\leq j\leq I-2,\\
  -\kappa(t)\left(\frac{\gamma(t)a\paren{\geq I-1}_\Sigma}{2}+\sum_{k=1}^{I-2}\frac{\gamma(\tau\paren{k})A\paren{k}}{I-k}\right),  &t\in[\tau\paren{I-1},T].
  \end{dcases}
\end{equation}
For each $1\leq j\leq I$, the candidate optimal trading strategy for agent $(j)$ is given by
\begin{equation}\label{def:theta}
  \theta\paren{j}_t := 
  \begin{dcases}
    \theta\paren{j}_{0-}+\frac{\mu_t}{\kappa(t)} + \gamma(t)a\paren{j}, & t\in[0,\tau\paren{j}],\\
    \theta\paren{j}_{\tau\paren{j}}, & t\in(\tau\paren{j}, T].
  \end{dcases}
\end{equation}

Theorem~\ref{thm:existence-2} (below) restates Theorem~\ref{thm:existence-1} using the rank-based notation.  The proof of Theorem~\ref{thm:existence-2} will show that $\mu$ in \eqref{def:mu} is the equilibrium stock drift, and $\theta\paren{j}$ is agent $(j)$'s optimal trading strategy.  The proof of Theorem~\ref{thm:existence-2} is provided at the end of Section~\ref{section:proofs}.
\begin{theorem}\label{thm:existence-2}
There exists an equilibrium in which the stock price is given by
\begin{equation}\label{def:S}
  S_t := \E[D] + \int_0^t \sigma_u dB_u - \int_t^T \mu_u du, \quad t\in[0,T],
\end{equation}
and $\theta\paren{1}\in\sA\paren{1}, \ldots, \theta\paren{I}\in\sA\paren{I}$ are the optimal trading strategies.
\end{theorem}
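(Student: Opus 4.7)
The plan is to verify the two conditions of Definition \ref{def:eq} for the candidate $(S, \theta^{(1)}, \ldots, \theta^{(I)})$ defined in \eqref{def:S} and \eqref{def:theta}, after confirming $\theta^{(j)} \in \sA^{(j)}$. Admissibility is routine: $\mu$ from \eqref{def:mu} is a bounded, $\sF_0$-measurable function of $t$ on each piece $[\tau^{(j)}, \tau^{(j+1)})$, $\kappa$ is bounded away from zero, and the relative targets $a^{(j)}$ inherit square-integrability from the $\tilde a_i$, so each $\theta^{(j)}$ is adapted, c\`adl\`ag, of finite variation, and $L^2$-integrable. Market clearing $\sum_{j=1}^I \theta^{(j)}_t = n$ is an algebraic identity. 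On $[\tau^{(j)}, \tau^{(j+1)})$, the stopped agents $(1), \ldots, (j)$ contribute $\theta^{(k)}_{\tau^{(k)}}$ and the active agents contribute $\theta^{(k)}_{0-} + \mu_t/\kappa(t) + \gamma(t) a^{(k)}$; summing and inserting \eqref{def:mu}, the time-$t$ terms from the active agents exactly cancel the $\gamma(t) a^{(\geq j+1)}_\Sigma/(I-j)$ piece in $\mu$, and the identities $a^{(\geq k)}_\Sigma = a^{(k)} + a^{(\geq k+1)}_\Sigma$ and $A^{(k)} = a^{(k)} - a^{(\geq k)}_\Sigma/(I-k+1)$ let a downward induction on $j$ collapse the remaining sum to $\sum_{i=1}^I \theta^{(i)}_{0-} = n$, using continuity of $\mu$ at each $\tau^{(k)}$ to propagate the identity across interval boundaries.

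For individual optimality, each agent $(j)$'s objective \eqref{eqn:optimality} is strictly concave in $\theta$ (strictly concave quadratic penalty, concave $-\lambda \cdot \mathrm{TV}$, linear drift gain), so a pointwise first-order condition is both necessary and sufficient. Introduce the adjoint-type process
\begin{equation*}
  Y^{(j)}_t := \int_t^T \bigl[\mu_u + \kappa(u)\bigl(\gamma(u) a^{(j)} - (\theta^{(j)}_u - \theta^{(j)}_{0-})\bigr)\bigr]\,du, \qquad t \in [0,T].
\end{equation*}
The $\sigma\,dB$ part of $\int_0^T \theta_u\,dS_u$ has zero conditional mean given $\sF_0$, so a variational calculation along $\epsilon \mapsto (1-\epsilon)\theta^{(j)} + \epsilon \eta$ for $\eta \in \sA^{(j)}$, together with an integration by parts in $t$, shows that optimality of $\theta^{(j)}$ is equivalent to the pointwise conditions $|Y^{(j)}_t| \le \lambda$ on $[0,T]$ together with the complementary slackness $Y^{(j)}_t = \lambda$ on the support of $d\theta^{(j),\uparrow}$ and $Y^{(j)}_t = -\lambda$ on the support of $d\theta^{(j),\downarrow}$.

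It remains to verify these first-order conditions for the explicit candidate \eqref{def:theta}. Substituting \eqref{def:theta} into the integrand of $Y^{(j)}$ gives $\mu_u - \mu_u = 0$ on $[0,\tau^{(j)}]$, so $Y^{(j)}$ is constant there at the value $Y^{(j)}_{\tau^{(j)}}$. On $(\tau^{(j)},T]$, the piecewise form of $\mu$ and the function $F$ from \eqref{F_def} let one rewrite $Y^{(j)}_t$ as precisely the linear combination of $F(t \vee \tau^{(k)})\,A^{(k)}$-terms appearing inside the absolute value in \eqref{def:sigma}; the stopping-time definition $\tau^{(j)} = \eta^{(j)}_{i^*}$ together with the maximality \eqref{choose_j} then force $|Y^{(j)}_t| \le \lambda$ on $(\tau^{(j)},T]$ and $|Y^{(j)}_{\tau^{(j)}}| = \lambda$, with sign matching that of the leading coefficient $A^{(j)}$. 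The last piece of complementary slackness is that $\theta^{(j)}$ is monotone on $[0,\tau^{(j)}]$ in the direction $\sign(Y^{(j)}_{\tau^{(j)}})$. I expect this monotonicity to be the main obstacle: because $\mu$ itself is not monotone in $t$ (cf.\ Figure \ref{figure:drift}), monotonicity of $\theta^{(j)} = \theta^{(j)}_{0-} + \mu/\kappa + \gamma \cdot a^{(j)}$ is not evident from \eqref{def:theta}, and establishing it will require a preparatory lemma exploiting the maximality in \eqref{choose_j} -- namely that $i^*$ has the largest $\eta^{(j)}_i$ among indices still in play -- to control the non-monotonic drift against the strictly increasing term $\gamma(t) a^{(j)}$ and show the net direction of $\theta^{(j)}$ cannot reverse before $\tau^{(j)}$.
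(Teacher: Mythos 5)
Your overall architecture --- admissibility, market clearing by algebra, and a verification of optimality through an adjoint process $Y\paren{j}$ with the bound $\left|Y\paren{j}\right|\le\lambda$ plus complementary slackness on the supports of $d\thupj$ and $d\thdnj$ --- is the same as the paper's (Propositions~\ref{prop:Y-representation} and~\ref{prop:theta_optimal}, Lemma~\ref{lem:sum_theta_formula}). However, two steps that you treat as immediate consequences of the definitions are genuine gaps.

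First, you nowhere establish the ordering $\tau\paren{1}\le\cdots\le\tau\paren{I-1}=\tau\paren{I}$, yet your whole argument uses it: the piecewise formula \eqref{def:mu} for $\mu$ only partitions $[0,T]$ if the times are ordered, $Y\paren{j}$ is constant on $[0,\tau\paren{j}]$ only if agent $(j)$ is genuinely the next to stop, and your market-clearing cancellation assumes exactly $I-j$ agents are still active on $[\tau\paren{j},\tau\paren{j+1})$. This ordering is not built into the backward-induction definitions \eqref{def:sigma}--\eqref{various:def}; the paper proves it in Theorem~\ref{thm:tau-ordering}, jointly and inseparably with the bounds $\left|Y\paren{k}\right|\le\lambda$, by a backward induction in which the ordering at rank $j$ requires the $Y$-bounds at ranks $k>j$ (via Lemma~\ref{lemma:Y-representation}) and vice versa.

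Second, your claim that on $(\tau\paren{j},T]$ the process $Y\paren{j}_t$ is ``precisely the linear combination of $F(t\vee\tau\paren{k})A\paren{k}$-terms appearing inside the absolute value in \eqref{def:sigma}'' is false for $t>\tau\paren{j}$. By \eqref{eqn:Y-def}, $\Gamma\paren{k}_t=A\paren{k}\int_{t\vee\tau\paren{k}}^T\kappa(u)\big(\gamma(u)-\gamma(\tau\paren{k})\big)du$, whose reference level is $\gamma(\tau\paren{k})$, whereas $A\paren{k}F(t\vee\tau\paren{k})$ carries the reference level $\gamma(t\vee\tau\paren{k})$; the two agree only for $t\le\tau\paren{k}$. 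Hence $\left|Y\paren{j}_t\right|\le\lambda$ on $(\tau\paren{j},T]$ does \emph{not} follow from the stopping-time definition, and this is where the paper has to work: it compares the derivative of the normalized process $Y\paren{j}_t/\int_t^T\kappa(u)du$ with that of $\lambda/\int_t^T\kappa(u)du$ (inequality \eqref{scaled_Y_prime_ineq}), which is exactly the step at which the expression from \eqref{def:sigma} reappears and the definition of $\tau\paren{j}$ can legitimately be invoked. Finally, you correctly flag monotonicity of $\theta\paren{j}$ as the remaining obstacle and point at the maximality in \eqref{choose_j} as the right tool --- this is carried out in the paper's Lemma~\ref{theta_sign_lem} and Proposition~\ref{cor:theta-monotone} --- but since you leave it unproven, the complementary-slackness half of your first-order condition is also open.
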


Finally, we work towards the definition of the processes $Y\paren{j}$, which are related to the agents' candidate first-order conditions.  To ease notation, we introduce $\Gamma\paren{j}$ for $1\leq j\leq I$ as
$$
  \Gamma\paren{j}_t:= A\paren{j}\int_{t\vee\tau\paren{j}}^T\kappa(u)\big(\gamma(u)-\gamma(\tau\paren{j})\big)du, \quad t\in[0,T].
$$
Since $\kappa$ is strictly positive and $\gamma$ is strictly increasing, we observe that
\begin{align}
&t\mapsto F(t) \quad \textrm{is strictly decreasing,} \label{int_strictly_decreasing}\\
&\left|   \Gamma\paren{j}_t \right| \leq \left| A\paren{j} \right| F(\tau\paren{j}) \quad \textrm{for}\quad t\in [0,T].  \label{F_Gamma_ineq}
\end{align}
Using this notation, agent $(j)$ for $1\leq j\leq I$ has the candidate first-order condition quantity
\begin{align}
  Y\paren{j}_t := \begin{dcases}
    \frac{I-j+1}{I-j}\Gamma\paren{j}_t + \sum_{k=j+1}^{I-2} \frac{\Gamma\paren{k}_t}{I-k}, & 1\leq j \leq I-2,\\
  \Gamma\paren{j}_t, &j =I-1, I,
 \end{dcases}, \quad t\in[0,T].  \label{eqn:Y-def}
\end{align}
For all $1\leq j\leq I$, we notice that if $\tau\paren{j}\leq\tau\paren{j+1}\leq\ldots\leq\tau\paren{I}$, then $Y\paren{j}_t = Y\paren{j}_{\tau\paren{j}}$ for all $t\in[0,\tau\paren{j}]$.  The process $Y\paren{j}$ is similar to the expression appearing inside \eqref{def:eta-last} and \eqref{def:sigma}, and we will prove below in Theorem~\ref{thm:tau-ordering} that $\left|Y\paren{j}\right|\leq\lambda$.

The following lemma is a convenient representation of $Y\paren{j}$ in \eqref{eqn:Y-def}.  The result holds even without assuming (or proving) that $\left|Y\paren{j}\right|\leq\lambda$ or $\tau\paren{j}\leq\tau\paren{j+1}\leq\ldots\leq\tau\paren{I}$. Indeed, once we prove that the $Y\paren{j}$ processes are bounded by $\lambda$, this lemma will be useful for showing that the times $\tau\paren{j}$ are ordered.
\begin{lemma}\label{lemma:Y-representation} For $1\leq j\leq I-2$, we have
\begin{equation}\label{eqn:eq-1}
  \sum_{k=j+1}^{I-2}\frac{\Gamma\paren{k}}{I-k} 
  = \frac{1}{I-j}\sum_{k=j+1}^{I-2}Y\paren{k}.
\end{equation}
Consequently,
\begin{equation}\label{eqn:eq-2}
  Y\paren{j} = \frac{I-j+1}{I-j}\Gamma\paren{j} 
    + \frac{1}{I-j}\sum_{k=j+1}^{I-2} Y\paren{k}.
\end{equation}
\end{lemma}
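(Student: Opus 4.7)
The plan begins by noting that \eqref{eqn:eq-2} is an immediate consequence of \eqref{eqn:eq-1}: substituting \eqref{eqn:eq-1} into the defining formula \eqref{eqn:Y-def} for $Y\paren{j}$ (valid here, since $1 \leq j \leq I-2$ places us in the first case of \eqref{eqn:Y-def}) yields \eqref{eqn:eq-2} directly. So the entire content of the lemma reduces to the summation identity \eqref{eqn:eq-1}.

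I would establish \eqref{eqn:eq-1} by backward induction on $j$, running from $j = I-2$ down to $j = 1$. The base case $j = I-2$ is immediate since both sides are empty sums (by the paper's stated convention) and hence zero. For the inductive step at $1 \leq j \leq I-3$, I would peel off the $k = j+1$ term from both sums in \eqref{eqn:eq-1}, apply the inductive hypothesis at index $j+1$ to replace $\sum_{k=j+2}^{I-2}\Gamma\paren{k}/(I-k)$ by $\frac{1}{I-j-1}\sum_{k=j+2}^{I-2}Y\paren{k}$, and substitute the definition $Y\paren{j+1} = \frac{I-j}{I-j-1}\Gamma\paren{j+1} + \sum_{k=j+2}^{I-2}\Gamma\paren{k}/(I-k)$ coming from \eqref{eqn:Y-def}. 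A short rearrangement then produces the factor $I-j$ needed on the right-hand side.

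A cleaner non-inductive alternative is a direct double-sum computation: expand $\sum_{k=j+1}^{I-2} Y\paren{k}$ using \eqref{eqn:Y-def} and swap the order of summation in the resulting double sum $\sum_{k=j+1}^{I-2}\sum_{\ell=k+1}^{I-2}\Gamma\paren{\ell}/(I-\ell)$, which reindexes to $\sum_{\ell=j+2}^{I-2}(\ell - j - 1)\Gamma\paren{\ell}/(I-\ell)$. Collecting the coefficients of each $\Gamma\paren{k}$ and applying the identity $(I-k+1)/(I-k) + (k-j-1)/(I-k) = (I-j)/(I-k)$ then produces $(I-j)\sum_{k=j+1}^{I-2}\Gamma\paren{k}/(I-k)$, which is \eqref{eqn:eq-1}. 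There is no real analytic obstacle: the claim is pathwise in $t$ and purely algebraic in the $\Gamma\paren{k}$'s, so the main effort is consistent index bookkeeping and honoring the empty-sum convention.
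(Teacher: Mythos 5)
Your proof is correct, and your primary route---backward induction on $j$, using the base case of empty sums at $j=I-2$ and, in the step, substituting the definition $Y\paren{j+1}=\frac{I-j}{I-j-1}\Gamma\paren{j+1}+\sum_{k=j+2}^{I-2}\Gamma\paren{k}/(I-k)$ together with the induction hypothesis at $j+1$---is essentially the same argument the paper gives (the paper additionally spells out $j=I-3$ as a second base case, but as you note this is already covered by the general step). Your non-inductive alternative is also correct and is a genuinely different, arguably cleaner route: swapping the order of summation in $\sum_{k=j+1}^{I-2}\sum_{\ell=k+1}^{I-2}\Gamma\paren{\ell}/(I-\ell)$ counts each $\Gamma\paren{\ell}/(I-\ell)$ exactly $\ell-j-1$ times, and the coefficient identity $\frac{I-k+1}{I-k}+\frac{k-j-1}{I-k}=\frac{I-j}{I-k}$ then yields \eqref{eqn:eq-1} in one pass with no hypothesis to carry; what it costs is slightly more index bookkeeping, which you handle correctly. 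Both versions rightly treat \eqref{eqn:eq-2} as an immediate consequence of \eqref{eqn:eq-1} and the definition \eqref{eqn:Y-def}.
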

\begin{proof} We prove the claim by backward induction on $j$.
\ \\
\noindent{\bf Base case:} For $j=I-2$, the sum in \eqref{eqn:eq-1} is empty, and so \eqref{eqn:eq-1} and \eqref{eqn:eq-2} trivially hold.  For $j=I-3$, 
$$
  \sum_{k=j+1}^{I-2} \frac{\Gamma\paren{k}}{I-k} = \frac{\Gamma\paren{j+1}}{I-j-1} = \frac{Y\paren{j+1}}{I-j} = \frac{1}{I-j}\sum_{k=j+1}^{I-2} Y\paren{k},
$$
as desired.
\ \\

\noindent{\bf Induction Hypothesis:} For some $1\leq j\leq I-3$, suppose that 
$$
  \sum_{k=j+2}^{I-2}\frac{\Gamma\paren{k}}{I-k} 
  = \frac{1}{I-j-1}\sum_{k=j+2}^{I-2}Y\paren{k}.
$$

\noindent{\bf Induction Step:} Since
$
  Y\paren{j+1} = \frac{I-j}{I-j-1}\Gamma\paren{j+1} + \sum_{k=j+2}^{I-2}\frac{\Gamma\paren{k}}{I-k},
$
 we deduce that
$$
  \frac{\Gamma\paren{j+1}}{I-j-1} = \frac{Y\paren{j+1}}{I-j} 
  - \frac{1}{I-j}\sum_{k=j+2}^{I-2}\frac{\Gamma\paren{k}}{I-k}.
$$
Thus, we have
\begin{align*}
  \sum_{k=j+1}^{I-2} \frac{\Gamma\paren{k}}{I-k}
  &= \frac{Y\paren{j+1}}{I-j} + \frac{I-j-1}{I-j} \sum_{k=j+2}^{I-2}\frac{\Gamma\paren{k}}{I-k} \\
  &= \frac{1}{I-j}\sum_{k=j+1}^{I-2} Y\paren{k},
  \quad \text{ by the induction hypothesis}.
\end{align*}
By \eqref{eqn:Y-def}, we see that \eqref{eqn:eq-2} holds too.
\end{proof}

The rank-based definitions lead to an ordering of the stop-trade times and boundedness of the candidate first-order conditions.
\begin{theorem}\label{thm:tau-ordering} We have that
$$
  0\leq\tau\paren{1}\leq \tau\paren{2}\leq \ldots \leq \tau\paren{I-1}=\tau\paren{I}<T,
$$
and for all $1\leq j\leq I$ and $t\in[0,T]$, we have that $\left|Y\paren{j}_t\right|\leq \lambda$.
\end{theorem}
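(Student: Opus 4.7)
I will prove both claims simultaneously by backward induction on $j$, from $j=I-1,I$ down to $j=1$.

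\textbf{Base case} ($j\in\{I-1,I\}$): By construction $\tau\paren{I-1}=\tau\paren{I}=\inf\{t:|A\paren{I-1}F(t)|\leq\lambda\}$. Continuity of $F$ with $F(T)=0$ implies the set contains a neighborhood of $T$, so $\tau\paren{I-1}<T$; the infimum definition and continuity give $|A\paren{I-1}|F(\tau\paren{I-1})\leq\lambda$. Then \eqref{F_Gamma_ineq} yields $|Y\paren{I-1}_t|=|\Gamma\paren{I-1}_t|\leq|A\paren{I-1}|F(\tau\paren{I-1})\leq\lambda$, and the same bound for $Y\paren{I}$ follows since $A\paren{I}=-A\paren{I-1}$ and the two share a stop-trade time.

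\textbf{Inductive step, ordering}: Suppose both claims hold for all ranks exceeding $j$. For any candidate $i\in\sI\paren{j}$ and level $\ell$, introduce
\[
\tilde{\nu}_\ell(i):=a_i-\tfrac{a\paren{\geq \ell+1}_\Sigma}{I-\ell},
\]
the value $\frac{I-\ell+1}{I-\ell}A\paren{\ell}$ would take if $i$ were selected as rank $(\ell)$. A short algebraic computation verifies the telescoping identity $\tilde{\nu}_j(i)+\frac{A\paren{j+1}}{I-j-1}=\tilde{\nu}_{j+1}(i)$, which, together with the inductive ordering for ranks $>j+1$, implies that the functions inside $\eta\paren{j}_i$'s and $\eta\paren{j+1}_i$'s defining infima coincide identically for $t\in[\tau\paren{j+1},T]$ (on each sub-interval the coefficient of $F(t)$ telescopes through the $\tilde{\nu}_\ell$'s and the residual constants match). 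Since $\sI\paren{j}\subset\sI\paren{j+1}$, the step-$(j+1)$ maximization gives $\eta\paren{j+1}_i\leq\tau\paren{j+1}$ for every $i\in\sI\paren{j}$; combining with the agreement at $t=\tau\paren{j+1}$ and the monotone-in-$F(t)$ structure of $h^{(j)}_i$ on $[0,\tau\paren{j+1}]$ then forces $\eta\paren{j}_i\leq\tau\paren{j+1}$. Taking the maximum over $i\in\sI\paren{j}$ yields $\tau\paren{j}\leq\tau\paren{j+1}$.

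\textbf{Inductive step, bound} $|Y\paren{j}_t|\leq\lambda$: With the ordering in hand, the note preceding Lemma~\ref{lemma:Y-representation} gives $Y\paren{j}_t=Y\paren{j}_{\tau\paren{j}}$ on $[0,\tau\paren{j}]$; a direct evaluation of each $\Gamma\paren{k}_{\tau\paren{j}}$ shows $Y\paren{j}_{\tau\paren{j}}$ equals the quantity inside $\eta\paren{j}_{(j)}$'s defining infimum at $t=\tau\paren{j}$, so $|Y\paren{j}_{\tau\paren{j}}|\leq\lambda$ by continuity. For $t\in[\tau\paren{j},T]$, I argue sub-interval by sub-interval using the representation $Y\paren{j}_t=\frac{I-j+1}{I-j}\Gamma\paren{j}_t+\frac{1}{I-j}\sum_{k=j+1}^{I-2}Y\paren{k}_t$ from Lemma~\ref{lemma:Y-representation}. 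On the initial sub-interval $[\tau\paren{j},\tau\paren{j+1}]$ the sum is constant, so $Y\paren{j}_t$ is a monotone function of $\Gamma\paren{j}_t$ and lies on the line segment joining $Y\paren{j}_{\tau\paren{j}}$ (of modulus $\leq\lambda$) and $\frac{1}{I-j}\sum_{k=j+1}^{I-2}Y\paren{k}_{\tau\paren{k}}$ (of modulus $\leq\frac{I-j-2}{I-j}\lambda<\lambda$ by the inductive hypothesis), forcing $|Y\paren{j}_t|\leq\lambda$.

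The hardest step is propagating the bound to sub-intervals $[\tau\paren{k},\tau\paren{k+1}]$ with $k>j$, on which several $\Gamma\paren{m}_t$ vary simultaneously and $Y\paren{j}_t$ is generally non-monotone. The derivative $\frac{d}{dt}Y\paren{j}_t=-\kappa(t)\sum_{m=j}^{k}c_m A\paren{m}(\gamma(t)-\gamma(\tau\paren{m}))$ with $c_j=\frac{I-j+1}{I-j}$ and $c_m=\frac{1}{I-m}$ is affine in $\gamma(t)$, so $Y\paren{j}_t$ admits at most one interior critical point per sub-interval. Combining the endpoint bound $|Y\paren{j}_{\tau\paren{k}}|\leq\lambda$ inherited from the previous sub-interval, the inductive bounds $|Y\paren{m}_t|\leq\lambda$ for $m>j$ (which tighten the triangle-inequality estimate within the slack $\frac{I-j-2}{I-j}<1$), and the boundary value $Y\paren{j}_T=0$, one verifies that any interior extremum has modulus at most $\lambda$; this closes the induction.
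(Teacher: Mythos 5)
Your treatment of the ordering $\tau\paren{j}\leq\tau\paren{j+1}$ is essentially sound and parallels the paper's: your telescoping identity $\tilde\nu_j(i)+\frac{A\paren{j+1}}{I-j-1}=\tilde\nu_{j+1}(i)$ is exactly the paper's identity \eqref{equality-1}, and the conclusion via the maximality of $\tau\paren{j+1}$ and the affine-in-$F(t)$ structure on $[0,\tau\paren{j+1}]$ is the same argument (the paper runs it by contradiction for the single index $i^*$ with $a_{i^*}=a\paren{j}$; you run it directly for all $i\in\sI\paren{j}$). Note, though, that this step silently uses $\bigl|\sum_{k=j+2}^{I-2}\frac{A\paren{k}}{I-k}F(\tau\paren{k})\bigr|<\lambda$, i.e.\ the paper's \eqref{ineq_Y_j+1}, which you should state since it is the reason the residual constant lies strictly inside the band. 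Your argument on the first sub-interval $[\tau\paren{j},\tau\paren{j+1}]$ is also fine.

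The genuine gap is in the step you yourself flag as hardest: the bound on $[\tau\paren{k},\tau\paren{k+1}]$ for $k>j$. Knowing that $\frac{d}{dt}Y\paren{j}_t$ is affine in $\gamma(t)$, hence that there is at most one interior critical point, together with $|Y\paren{j}_{\tau\paren{k}}|\leq\lambda$ and $Y\paren{j}_T=0$, does not bound the value at that critical point: a function can start at $\lambda$, overshoot to $1.5\lambda$ at its unique interior maximum, and return below $\lambda$ at the right endpoint. The "tightened triangle-inequality" you allude to is never carried out, and the naive version fails: from $|Y\paren{j}_{\tau\paren{j}}|\leq\lambda$ one only gets $\frac{I-j+1}{I-j}|\Gamma\paren{j}_t|<2\lambda$, which combined with $\frac{I-j-2}{I-j}\lambda$ for the remaining sum exceeds $\lambda$ by a wide margin. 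The paper closes this step by a different device: it normalizes by $\int_t^T\kappa(u)\,du$ and proves the derivative bound
\begin{equation*}
\left|\frac{d}{dt}\left[\frac{Y\paren{j}_t}{\int_t^T\kappa(u)du}\right]\right|\leq\frac{d}{dt}\left[\frac{\lambda}{\int_t^T\kappa(u)du}\right],\qquad t\in[\tau\paren{j},T),
\end{equation*}
which, after the explicit derivative computation, reduces to the single inequality
$\bigl|\frac{I-j+1}{I-j}A\paren{j}F(t)+\sum_{k=j+1}^{I-2}\frac{A\paren{k}}{I-k}F(t\vee\tau\paren{k})\bigr|\leq\lambda$ for $t\geq\tau\paren{j}$ --- precisely the expression whose first entry into the band $[-\lambda,\lambda]$ defines $\tau\paren{j}$. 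Integrating the derivative bound from $\tau\paren{j}$ then yields $|Y\paren{j}_t|\leq\lambda$ on all later sub-intervals at once, with no need to locate or estimate interior extrema. Without this (or an equivalent quantitative mechanism), your induction does not close. A secondary, fixable issue: your base case stops at $j\in\{I-1,I\}$, whereas the step from $j=I-2$ to $j=I-1$ does not fit your inductive template because $\eta\paren{I-1}_{i,l}$ is defined with two indices; the paper treats $j=I-2$ as a separate base case for this reason.
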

\begin{proof}
  First, we notice that for all $1\leq j\leq I$, we have $0\leq \tau\paren{j}<T$. Since $\tau\paren{I-1}=\tau\paren{I}$ and $Y\paren{I-1}=-Y\paren{I}$, we prove the claim for indices $j$ with $1\leq j\leq I-1$.  We proceed by backward induction on $j$.
  \ \\
  
  \noindent{\bf Base Case:} Beginning with $j=I-2$, we seek to prove that $\tau\paren{I-2}\leq\tau\paren{I-1}$, $\left|Y\paren{I-2}\right|\leq\lambda$, and $\left|Y\paren{I-1}\right|\leq\lambda$.  The definitions of $\tau\paren{I-1}$ and $\tau\paren{I-2}$ imply that $\tau\paren{I-1}$ and $\tau\paren{I-2}$ can be rewritten as
\begin{align*}
  \tau\paren{I-1} = \inf&\left\{t\in[0,T]:\ \left| A\paren{I-1}\right| F(t)\leq\lambda\right\},\\
  \tau\paren{I-2} = \inf&\left\{t\in[0,T]:\ \left|\frac32 A\paren{I-2}\right|F(t)\leq\lambda\right\},
\end{align*}
which imply
\begin{align}
\left| A\paren{I-1}\right| F(\tau\paren{I-1}) \leq \lambda \quad \text{and} \quad  \left|\frac32 A\paren{I-2} \right| F(\tau\paren{I-2}) \leq \lambda. \label{AF_bound}
\end{align}
The definition of $Y\paren{I-1}$ and $Y\paren{I-2}$ in \eqref{eqn:Y-def} are expressed by
\begin{align*}
  Y\paren{I-1} = \Gamma\paren{I-1}, \quad
  Y\paren{I-2} = \frac{3}{2}\Gamma\paren{I-2}.
\end{align*}
The above expression, \eqref{F_Gamma_ineq}, and \eqref{AF_bound} imply that $\left|Y\paren{I-1}\right|\leq\lambda$ and $\left|Y\paren{I-2}\right|\leq\lambda$.  

It remains to show that $\tau\paren{I-2}\leq\tau\paren{I-1}$. By \eqref{int_strictly_decreasing}, it suffices to prove that $\left|\frac32 A\paren{I-2}\right|\leq\left|A\paren{I-1}\right|$.
By the definition of $\tau\paren{I-1}$, if $\tau\paren{I-1}>0$, then we have
\begin{equation}\label{equality-2}
  \left|a\paren{I-2}-\frac{a\paren{I-2}+a\paren{I}}{2}\right|
  \leq \left|a\paren{I-1}-\frac{a\paren{I-1}+a\paren{I}}{2}\right|.
\end{equation}
Then, we have
\begin{align*}
  \left|\frac32 A\paren{I-2}\right|
  &=\left|a\paren{I-2}-\frac{a\paren{I-2}+a\paren{I}}{2}\right|\quad\text{by algebra}\\
  &\leq \left|a\paren{I-1}-\frac{a\paren{I-1}+a\paren{I}}{2}\right| \quad\text{by \eqref{equality-2}} \\
  &= \left|A\paren{I-1}\right|.
\end{align*}
Therefore, $\frac32 \left|A\paren{I-2}\right| \leq \left|A\paren{I-1}\right|$,  which implies that $\tau\paren{I-2}\leq\tau\paren{I-1}$.
\ \\

\noindent{\bf Induction Hypothesis:} For some $1\leq j\leq I-3$, suppose that $\tau\paren{j+1}\leq\ldots\leq\tau\paren{I-1}$ and $\left|Y\paren{j+1}\right|, \ldots, \left|Y\paren{I-1}\right| \leq \lambda$. 
  \ \\
  
  \noindent{\bf Induction Step:} Let $1\leq j\leq I-3$ be given, and suppose that the Induction Hypothesis holds for $j$.  We seek to show that $\tau\paren{j}\leq\tau\paren{j+1}$ and $\left|Y\paren{j}\right|\leq \lambda$. 

First, we show $\tau\paren{j}\leq\tau\paren{j+1}$. The definition of $\tau\paren{j}$ implies that 
$$
  \tau\paren{j} = \inf\left\{t\in[0,T]:\ \left|\frac{I-j+1}{I-j}A\paren{j}F(t)  + \sum_{k=j+1}^{I-2} \frac{A\paren{k}}{I-k} \, F(t\vee\tau\paren{k})\right|\leq\lambda\right\}
$$
Therefore, by the definition of $\tau\paren{j}$ and the induction hypothesis that $\tau\paren{j+1}\leq\ldots\leq\tau\paren{I-1}$, it is enough to show that 
\begin{align}
\left| \frac{I-j+1}{I-j}A\paren{j} F(\tau\paren{j+1}) + \sum_{k=j+1}^{I-2} \frac{A\paren{k}}{I-k}F(\tau\paren{k})       \right| \leq \lambda. \label{tau_j+1_wts}
\end{align}
Simple algebra produces
  \begin{equation}
  \begin{split}\label{equality-1}
  \frac{I-j+1}{I-j}A\paren{j} + \frac{1}{I-j-1}A\paren{j+1}=\frac{I-j}{I-j-1}\tilde A\paren{j+1}, \\ \textrm{where} \quad \tilde A\paren{j+1}:=a\paren{j}-\frac{1}{I-j}\left(a\paren{j}+a\paren{\geq j+2}_\Sigma\right).
  \end{split}
\end{equation}
By the induction hypothesis that $\left|Y\paren{j+1}\right|, \ldots, \left|Y\paren{I-2}\right|\leq\lambda$ and identity \eqref{eqn:eq-1} in Lemma~\ref{lemma:Y-representation}, we have
\begin{align}
\left|\sum_{k=j+2}^{I-2}\frac{A\paren{k}}{I-k}F(\tau\paren{k})\right|=    \left| \sum_{k=j+2}^{I-2}\frac{\Gamma\paren{k}_{\tau\paren{j+1}}}{I-k} \right| = \frac{1}{I-j-1}\left|\sum_{k=j+2}^{I-2} Y\paren{k}_{\tau\paren{j+1}}\right| \leq \frac{I-j-3}{I-j-1}\lambda< \lambda. \label{ineq_Y_j+1}
\end{align}
Using \eqref{equality-1}, we check that \eqref{tau_j+1_wts} is equivalent to
\begin{align}
\left|  \frac{I-j}{I-j-1} \tilde A\paren{j+1}F(\tau\paren{j+1})
   +\sum_{k=j+2}^{I-2}\frac{A\paren{k}}{I-k}F(\tau\paren{k})  \right| \leq \lambda. \label{ineq_i*}
\end{align}
Suppose that \eqref{ineq_i*} is false. Then, by \eqref{ineq_Y_j+1} and \eqref{int_strictly_decreasing}, the following inequality should hold:
\begin{align*}
\left|  \frac{I-j}{I-j-1} \tilde A\paren{j+1}F(t)
   +\sum_{k=j+2}^{I-2} \frac{A\paren{k}}{I-k}F(\tau\paren{k})  \right| > \lambda, \quad \textrm{for}\quad t\in [0,\tau\paren{j+1}].
\end{align*}
The above inequality implies $\eta\paren{j+1}_{i^*}>\tau\paren{j+1}$ where $a_{i^*}=a\paren{j}$, which contradicts the construction of $\tau\paren{j+1}$. Therefore, we conclude that \eqref{ineq_i*} holds, so does \eqref{tau_j+1_wts}. 
    
\medskip
  
Second, we prove that $\left|Y\paren{j}\right|\leq\lambda$ on $[0,T]$. Since we have $\tau\paren{j}\leq\ldots\leq\tau\paren{I-1}$ from the first part, we know that $Y\paren{j}_t=Y\paren{j}_{\tau\paren{j}}$ for $t\in[0,\tau\paren{j}]$. The definition of $\tau\paren{j}$ and $\tau\paren{j}\leq\ldots\leq\tau\paren{I-1}$ imply $\left|Y\paren{j}_{\tau\paren{j}}\right|\leq\lambda$. Therefore, we conclude that $\left|Y\paren{j}_t\right|\leq\lambda$ for $t\in[0,\tau\paren{j}]$. We also have that $Y\paren{j}_T=0$.
  
  It remains to prove that $\left|Y\paren{j}_t\right|\leq\lambda$ for $t\in (\tau\paren{j},T)$. Due to the strict positivity of $\kappa$ and the continuity of $Y\paren{j}$, it is enough to show that
  \begin{equation}\label{eqn:Y-bound}
    \frac{Y\paren{j}_t}{\int_t^T\kappa(u)du} \leq \frac{\lambda}{\int_t^T\kappa(u)du} \quad \text{and} \quad -\frac{Y\paren{j}_t}{\int_t^T\kappa(u)du} \leq \frac{\lambda}{\int_t^T\kappa(u)du}, \quad t\in[\tau\paren{j},T).
  \end{equation}
We show this by noting that \eqref{eqn:Y-bound} holds at $t=\tau\paren{j}$ and by checking that
\begin{align}
\left| \frac{d}{dt}\left[\frac{Y\paren{j}_t}{\int_t^T\kappa(u)du}\right] \right| \leq  \frac{d}{dt}\left[\frac{\lambda}{\int_t^T\kappa(u)du}\right], \quad t\in[\tau\paren{j}, T). \label{scaled_Y_prime_ineq} 
\end{align}
  Since $\kappa$ is strictly positive and continuous on $[0,T]$, the above expressions are well-defined.  

  For $t\in[\tau\paren{j},T)$,  either $t\in[\tau\paren{j+m}, \tau\paren{j+m+1})$ for some $0\leq m\leq I-j-3$ or $t\in[\tau\paren{j+m}, T)$ for $m=I-j-2$.  Then,
  $$
    Y\paren{j}_t 
    = \underbrace{\frac{I-j+1}{I-j}\Gamma\paren{j}_t + \sum_{k=j+1}^{j+m} \frac{\Gamma\paren{k}_t}{I-k}}_{t-\text{dependence}} + \underbrace{\sum_{l=j+m+1}^{I-2}\frac{\Gamma\paren{l}_{\tau\paren{l}}}{I-l}}_{\text{no } t-\text{dependence}},
  $$
  and, moreover,
  \begin{align*} 
    \frac{Y\paren{j}_t}{\int_t^T\kappa(u)du}
    &= \underbrace{\left(\frac{I-j+1}{I-j}A\paren{j} + \sum_{k=j+1}^{j+m} \frac{A\paren{k}}{I-k}\right)\frac{\int_t^T\kappa(u)\big(\gamma(u)-\gamma(\tau\paren{j+m})\big)du}{\int_t^T\kappa(u)du} + \frac{\sum_{l=j+m+1}^{I-2}\frac{\Gamma\paren{l}_{\tau\paren{l}}}{I-l}}{\int_t^T\kappa(u)du}}_{t-\text{dependence}}  \\
    & \quad \quad + \underbrace{\frac{I-j+1}{I-j}A\paren{j}\big(\gamma(\tau\paren{j+m})-\gamma(\tau\paren{j})\big) + \sum_{k=j+1}^{j+m-1}\frac{A\paren{k}}{I-k}\big(\gamma(\tau\paren{j+m})-\gamma(\tau\paren{k})\big)}_{\text{no } t-\text{dependence}} 
  \end{align*}
  As usual, the summations are considered empty and zero-valued if their upper and lower bounds do not overlap.  As side calculations, 
\begin{align}
    \frac{d}{dt}\left[\frac{1}{\int^T_t\kappa(u)du}\right] = \frac{\kappa(t)}{\left(\int^T_t\kappa(u)du\right)^2}, \label{kappa_prime}
\end{align}
  and for any $s\in[0,T]$,
  \begin{align*}
    \frac{d}{dt}&\left[\frac{\int^T_t\kappa(u)\big(\gamma(u)-\gamma(s)\big)du}{\int^T_t\kappa(u)du}\right]\\
    &= \frac{\kappa(t)}{\left(\int^T_t\kappa(u)du\right)^2}\left(-\big(\gamma(t)-\gamma(s)\big)\int^T_t\kappa(u)du + \int^T_t\kappa(u)\big(\gamma(u)-\gamma(s)\big)du\right)\\
    &= \frac{\kappa(t)}{\left(\int^T_t\kappa(u)du\right)^2} F(t).
  \end{align*}
  Therefore, we have
  \begin{align*}\label{eqn:Y-derivative}
    \frac{d}{dt}\left[\frac{Y\paren{j}_t}{\int_t^T\kappa(u)du}\right]
    &= \frac{\kappa(t)}{\left(\int_t^T\kappa(u)du\right)^2}\left( \left(\frac{I-j+1}{I-j}A\paren{j} + \sum_{k=j+1}^{j+m} \frac{A\paren{k}}{I-k}\right)F(t) + \sum_{l=j+m+1}^{I-2}\frac{\Gamma\paren{l}_{\tau\paren{l}}}{I-l} \right).
  \end{align*}
  Care should be taken when considering derivatives calculated at the stop-trade times $\tau\paren{j}, \tau\paren{j+1}, \ldots, \tau\paren{I-2}$, since these times are at boundary points of the calculation intervals.  In fact, a closer inspection reveals that the left and right derivatives of $Y\paren{j}/\int_\cdot^T \kappa$ are the same. 

Now we are ready to prove \eqref{scaled_Y_prime_ineq}, and hence to conclude the induction step. By the above expression and \eqref{kappa_prime}, it suffices to show that
  \begin{equation}\label{suff-cond-2}
    \left| \left(\frac{I-j+1}{I-j}A\paren{j} + \sum_{k=j+1}^{j+m} \frac{A\paren{k}}{I-k}\right)F(t) + \sum_{l=j+m+1}^{I-2}\frac{\Gamma\paren{l}_{\tau\paren{l}}}{I-l}  \right| \leq \lambda.
  \end{equation}       
Since $\tau\paren{j}\leq\tau\paren{j+1}\leq\ldots\leq\tau\paren{I-2}$ and $\Gamma\paren{l}_{\tau\paren{l}}=A\paren{l}F(t\vee\tau\paren{l})$ for $l\geq j+m+1$, the above inequality is equivalent to    
    \begin{equation}\label{suff-cond-3}
    \left| \frac{I-j+1}{I-j}A\paren{j} F(t) + \sum_{k=j+1}^{I-2} \frac{A\paren{k}}{I-k}\,F(t\vee\tau\paren{k}) \right| \leq \lambda.
  \end{equation}
Indeed, the definition of $\tau\paren{j}$ and the inequality $t\geq \tau\paren{j}$ imply that \eqref{suff-cond-3} should hold.  
 
\end{proof}

\section{Verification and related inequalities}\label{section:proofs}
In order to prove that the definitions of Section~\ref{section:definitions} satisfy the definition of equilibrium, we need to provide a verification argument.
The argument will rely on Theorem~\ref{thm:tau-ordering}, the properties of $\theta\paren{j}$ in Proposition~\ref{cor:theta-monotone}, and the representation of $Y\paren{j}$ in Proposition~\ref{prop:Y-representation}.



The following lemmas will help to prove needed properties of the candidate optimal trading strategies in Proposition~\ref{cor:theta-monotone}.
\begin{lemma}\label{theta_sign_lem}
Let $1\leq j\leq I-1$ and $1\leq i\leq j-1$. If $\tau\paren{j}>0$ and $A\paren{j}\geq 0$ ($A\paren{j}\leq 0$, resp.), then 
$Y\paren{j}_t=\lambda$ for $t\in [0,\tau\paren{j}]$ and $a\paren{j} \geq a\paren{i}$ ($Y\paren{j}_t=-\lambda$ for $t\in [0,\tau\paren{j}]$ and $a\paren{j} \leq a\paren{i}$, resp.).
\end{lemma}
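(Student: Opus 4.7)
The plan is to identify $Y\paren{j}_{\tau\paren{j}}$ with the very quantity appearing inside the absolute value in the construction of $\tau\paren{j}$ at the alphabetic index $i^*$ realizing rank $(j)$, and then read off both conclusions. By Theorem~\ref{thm:tau-ordering}, $\tau\paren{j}\leq\tau\paren{k}$ for all $k\geq j$, so each $\Gamma\paren{k}_t$ with $k\geq j$ is constant on $[0,\tau\paren{j}]$; therefore $Y\paren{j}$ is constant there, equal to $Y\paren{j}_{\tau\paren{j}}$. Using $a\paren{j}-\frac{1}{I-j+1}(a\paren{j}+a\paren{\geq j+1}_\Sigma)=A\paren{j}$, a direct calculation gives
\[
Y\paren{j}_{\tau\paren{j}} \;=\; \frac{I-j+1}{I-j}\,A\paren{j}\,F(\tau\paren{j}) \;+\; C,\qquad C:=\sum_{k=j+1}^{I-2}\frac{A\paren{k}}{I-k}F(\tau\paren{k}),
\]
which is precisely the expression inside the absolute value of $\eta\paren{j}_{i^*}$ evaluated at $t=\tau\paren{j}$. (When $j=I-1$ the relevant sum is empty and $C=0$; the definition uses a pair $(i,l)$ but the identification still holds.)

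Next, to pin down the sign of $Y\paren{j}_{\tau\paren{j}}$, I would bound $|C|$ strictly by $\lambda$. Since $\tau\paren{j}\leq\tau\paren{k}$ for $k\geq j+1$, we have $\Gamma\paren{k}_{\tau\paren{j}}=A\paren{k}F(\tau\paren{k})$, so Lemma~\ref{lemma:Y-representation} gives $C=\frac{1}{I-j}\sum_{k=j+1}^{I-2}Y\paren{k}_{\tau\paren{j}}$, and Theorem~\ref{thm:tau-ordering} then yields $|C|\leq\frac{I-j-2}{I-j}\lambda<\lambda$. Since $\tau\paren{j}>0$, continuity of the expression inside the infimum defining $\tau\paren{j}$ forces $|Y\paren{j}_{\tau\paren{j}}|=\lambda$. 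Combined with $|C|<\lambda$ and $F(\tau\paren{j})>0$ (as $\tau\paren{j}<T$), the leading term $\frac{I-j+1}{I-j}A\paren{j}F(\tau\paren{j})$ must be nonzero and share the sign of $Y\paren{j}_{\tau\paren{j}}$. So under $A\paren{j}\geq 0$ we must actually have $A\paren{j}>0$ and $Y\paren{j}_{\tau\paren{j}}=+\lambda$, establishing $Y\paren{j}_t=\lambda$ on $[0,\tau\paren{j}]$.

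For the comparison $a\paren{j}\geq a\paren{i}$, I argue by contradiction: suppose some $i\in\sI\paren{j}$ has $a_i>a\paren{j}$. On $[0,\tau\paren{j}]$ the expression inside $\eta\paren{j}_i$ simplifies to $E_i(t):=(a_i-\bar a_{j+1})F(t)+C$, where $\bar a_{j+1}:=a\paren{\geq j+1}_\Sigma/(I-j)$ and $C$ is the same constant as above (by constancy of the $\Gamma\paren{k}$ terms). Since $A\paren{j}>0$ is equivalent to $a\paren{j}>\bar a_{j+1}$ and $a_i>a\paren{j}$, we have $a_i>\bar a_{j+1}$, making $E_i$ strictly decreasing on $[0,\tau\paren{j}]$, so
\[
E_i(t)\;\geq\; E_i(\tau\paren{j})\;=\;\lambda+(a_i-a\paren{j})F(\tau\paren{j})\;>\;\lambda \qquad \textrm{for all } t\in[0,\tau\paren{j}].
\]
Continuity of $E_i$ past $\tau\paren{j}$ then forces $\eta\paren{j}_i>\tau\paren{j}$, contradicting the maximality of $\tau\paren{j}=\eta\paren{j}_{i^*}$ over $\sI\paren{j}$. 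Hence $a_i\leq a\paren{j}$ for every $i\in\sI\paren{j}$, and since the index $i^*_i\in\sI\paren{i}\subseteq\sI\paren{j}$ chosen at rank-step $i$ satisfies $a\paren{i}=a_{i^*_i}$, we conclude $a\paren{i}\leq a\paren{j}$ for each $1\leq i\leq j-1$. The case $A\paren{j}\leq 0$ follows by the symmetric argument with signs reversed. The main obstacle is securing the strict bound $|C|<\lambda$: without it, the sign of $Y\paren{j}_{\tau\paren{j}}$ cannot be tied to the sign of $A\paren{j}$ and the whole argument collapses; Lemma~\ref{lemma:Y-representation} combined with the uniform bound from Theorem~\ref{thm:tau-ordering} is precisely what delivers it.
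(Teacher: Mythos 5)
Your proof is correct and follows essentially the same route as the paper's: constancy of $Y\paren{j}$ on $[0,\tau\paren{j}]$ from Theorem~\ref{thm:tau-ordering}, the sign identification $\sign(Y\paren{j}_{\tau\paren{j}})=\sign(A\paren{j})$ via the strict bound $|C|<\lambda$ obtained from Lemma~\ref{lemma:Y-representation} and $|Y\paren{k}|\leq\lambda$, and the contradiction with the maximality of $\tau\paren{j}=\eta\paren{j}_{i^*}$ over $\sI\paren{j}$ when some $a_i>a\paren{j}$. The only difference is presentational: you spell out the strict bound on the tail sum and the reduction of the coefficient to $a_i-a\paren{\geq j+1}_\Sigma/(I-j)$ more explicitly than the paper does.
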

\begin{proof}
Without loss of generality, we consider the case of $A\paren{j}\geq 0$, since the $A\paren{j}\leq 0$ is handled analogously. Representation \eqref{eqn:eq-2}, the assumption that $\tau\paren{j}>0$, and $\left|Y\paren{j+1}\right|, \ldots, \left|Y\paren{I-1}\right|\leq\lambda$ imply that $\sign(Y\paren{j}_{\tau\paren{j}}) = \sign(A\paren{j})$, and so $Y\paren{j}_{\tau\paren{j}}=\lambda$.
Since $\tau\paren{j}\leq\tau\paren{j+1}\leq\ldots\leq\tau\paren{I}$ by Theorem~\ref{thm:tau-ordering}, the definition of $Y\paren{j}$ in \eqref{eqn:Y-def} implies $Y\paren{j}_t = Y\paren{j}_{\tau\paren{j}}=\lambda$ for all $t\in[0,\tau\paren{j}]$.

Suppose that there exists $1\leq i \leq j-1$ such that $a\paren{j} < a\paren{i}$. Then, 
\begin{align*}
 \frac{I-j+1}{I-j} \left(a\paren{i}-  \frac{a\paren{i}+a\paren{\geq j+1}_\Sigma}{I-j+1} \right) &= a\paren{i}-  \frac{1}{I-j}a\paren{\geq j+1}_\Sigma \\
 &>a\paren{j}-  \frac{1}{I-j}a\paren{\geq j+1}_\Sigma = \frac{I-j+1}{I-j} A\paren{j} \geq 0.
\end{align*}
 Then, the above inequality implies that for $t\in [0,\tau\paren{j}]$,
\begin{align}\label{eqn:contradiction}
 \frac{I-j+1}{I-j} \left(a\paren{i}-  \frac{a\paren{i}+a\paren{\geq j+1}_\Sigma}{I-j+1} \right) F(t) + \sum_{k=j+1}^{I-2} \frac{A\paren{k}}{I-k}\,F(\tau\paren{k})> Y\paren{j}_{\tau\paren{j}}= \lambda.
\end{align}
The inequality~\eqref{eqn:contradiction} contradicts to the maximality of $\tau\paren{j}$ in $\tau\paren{j}$'s definition.
\end{proof}

\begin{lemma}\label{mu_conti_lem}
For $\mu$ defined in \eqref{def:mu}, the map $t\in[0,T]\mapsto \frac{\mu_t}{\kappa(t)}$ is continuous.
\end{lemma}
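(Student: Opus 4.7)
The plan is to check continuity interval-by-interval and then verify that the left and right values match at each stop-trade time. On each open subinterval $(\tau\paren{j},\tau\paren{j+1})$ (for $0\leq j\leq I-2$ with $\tau\paren{j}<\tau\paren{j+1}$) as well as on $(\tau\paren{I-1},T]$, the defining formula \eqref{def:mu} writes $\mu_t/\kappa(t)$ as $-\gamma(t)$ times a constant plus a $t$-independent term, so continuity there is immediate from continuity of $\gamma$. Observe also that the $j=I-2$ case of the generic formula coincides verbatim with the terminal formula on $[\tau\paren{I-1},T]$, so no discontinuity can arise at the endpoint $\tau\paren{I-1}$.

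The real work is at the interior transitions. Abbreviate $g_j(t):=\frac{\gamma(t)a\paren{\geq j+1}_\Sigma}{I-j}+\sum_{k=1}^{j}\frac{\gamma(\tau\paren{k})A\paren{k}}{I-k}$ for $0\leq j\leq I-2$, so that $\mu_t/\kappa(t) = -g_j(t)$ on the appropriate piece. For each $1\leq j\leq I-2$, I plan to verify the one-step matching equality $g_{j-1}(\tau\paren{j}) = g_j(\tau\paren{j})$. After cancelling the common tail $\sum_{k=1}^{j-1}\frac{\gamma(\tau\paren{k})A\paren{k}}{I-k}$ and factoring out $\gamma(\tau\paren{j})$, this reduces to the single algebraic identity
\begin{equation*}
  \frac{a\paren{\geq j}_\Sigma}{I-j+1} \;=\; \frac{a\paren{\geq j+1}_\Sigma + A\paren{j}}{I-j},
\end{equation*}
which I expect to follow in one line from the recursions $a\paren{\geq j}_\Sigma = a\paren{j}+a\paren{\geq j+1}_\Sigma$ and $A\paren{j} = a\paren{j}-\frac{1}{I-j+1}a\paren{\geq j}_\Sigma$ by eliminating $a\paren{j}$. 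This handles every non-degenerate transition.

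For the potentially coincident case $\tau\paren{j}=\tau\paren{j+1}=\cdots=\tau\paren{j+r}$ with $r\geq 1$ (guaranteed to be ordered by Theorem~\ref{thm:tau-ordering}), the intermediate subintervals are empty and the genuine transition is from $g_{j-1}$ directly to $g_{j+r}$. I plan to handle this by telescoping the one-step identity: $g_{j-1}(\tau\paren{j}) = g_j(\tau\paren{j}) = g_j(\tau\paren{j+1}) = g_{j+1}(\tau\paren{j+1}) = \cdots = g_{j+r}(\tau\paren{j+r})$, where the ``$=$'' signs alternate between applications of the one-step identity and the tautology $\tau\paren{j+m}=\tau\paren{j+m+1}$. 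No step is genuinely hard; the only place requiring care is the bookkeeping across ties, which the telescoping absorbs cleanly.
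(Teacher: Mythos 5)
Your proof is correct and follows essentially the same route as the paper: continuity is clear away from the stop-trade times, and the matching at each $\tau\paren{j}$ reduces to the identity $\frac{a\paren{\geq j}_\Sigma}{I-j+1}=\frac{a\paren{\geq j+1}_\Sigma}{I-j}+\frac{A\paren{j}}{I-j}$, which is exactly the identity the paper invokes. Your explicit telescoping across tied stop-trade times is a slightly more careful bookkeeping of a case the paper handles implicitly, but it is not a different argument.
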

\begin{proof}
Since $\gamma$ is continuous, it is enough to check that for $1\leq j \leq I-1$,
\begin{align}
\lim_{t\uparrow \tau\paren{j}} \frac{\mu_t}{\kappa(t)} = \frac{\mu_{\tau\paren{j}}}{\kappa(\tau\paren{j})}. \label{mu_conti_check}
\end{align}
By \eqref{def:mu}, we have
\begin{align}
\lim_{t\uparrow \tau\paren{j}} \frac{\mu_t}{\kappa(t)}
& = 
  -\frac{\gamma(\tau\paren{j})a\paren{\geq j}_\Sigma}{I-j+1} - \sum_{k=1}^{j-1}\frac{\gamma(\tau\paren{k})A\paren{k}}{I-k} \nonumber\\
& =  
\begin{dcases}
 -\frac{\gamma(\tau\paren{j})a\paren{\geq j+1}_\Sigma}{I-j} - \sum_{k=1}^{j}\frac{\gamma(\tau\paren{k})A\paren{k}}{I-k}, &1\leq j \leq I-2,\\
   -\frac{\gamma(\tau\paren{j})a\paren{\geq j}_\Sigma}{I-j+1} - \sum_{k=1}^{j-1}\frac{\gamma(\tau\paren{k})A\paren{k}}{I-k}, & j =I-1.
\end{dcases}, \label{mu_conti_check2}
\end{align}
where the second equality is due to $\frac{a\paren{\geq j}_\Sigma}{I-j+1}= \frac{a\paren{\geq j+1}_\Sigma}{I-j}+ \frac{A\paren{j}}{I-j}$. By \eqref{def:mu} and \eqref{mu_conti_check2}, we conclude that \eqref{mu_conti_check} holds. 
\end{proof}

\begin{proposition}\label{cor:theta-monotone}
For $1\leq j\leq I-1$, $\theta\paren{j}$ is continuous and monotone on $[0,T]$.  If $A\paren{j}\geq 0$ ($A\paren{j}\leq 0$, resp.), then $\theta\paren{j}$ is monotone increasing (decreasing, resp.).
\end{proposition}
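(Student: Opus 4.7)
The plan is to treat continuity first and then monotonicity, where the latter is the substantive step.

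For continuity, I would observe that on $[0,\tau\paren{j}]$ the candidate strategy
\[\theta\paren{j}_t = \theta\paren{j}_{0-} + \mu_t/\kappa(t) + \gamma(t)\,a\paren{j}\]
is continuous by Lemma~\ref{mu_conti_lem} and the continuity of $\gamma$, while on $(\tau\paren{j},T]$ it is constant. The two formulas agree at $\tau\paren{j}$ by the construction in \eqref{def:theta}, giving continuity on all of $[0,T]$.

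For monotonicity on $[0,\tau\paren{j}]$, the plan is to unfold the piecewise definition \eqref{def:mu} of $\mu$. By Theorem~\ref{thm:tau-ordering}, the interval $[0,\tau\paren{j}]$ is covered by sub-intervals $[\tau\paren{m},\tau\paren{m+1})$ for $0\leq m\leq j-1$ together with the single point $\tau\paren{j}$. On such a sub-interval, substituting \eqref{def:mu} into \eqref{def:theta} yields
\[\theta\paren{j}_t = \theta\paren{j}_{0-} + \gamma(t)\left(a\paren{j}-\frac{a\paren{\geq m+1}_\Sigma}{I-m}\right) - \sum_{k=1}^{m}\frac{\gamma(\tau\paren{k})A\paren{k}}{I-k}.\]
Since $\gamma$ is strictly increasing, $\theta\paren{j}$ is nondecreasing on this sub-interval precisely when $a\paren{j}-a\paren{\geq m+1}_\Sigma/(I-m)\geq 0$, and nonincreasing when this quantity is $\leq 0$.

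Assume $A\paren{j}\geq 0$; the case $A\paren{j}\leq 0$ is handled symmetrically. If $\tau\paren{j}=0$ there is nothing to verify, so suppose $\tau\paren{j}>0$. Lemma~\ref{theta_sign_lem} then yields $a\paren{j}\geq a\paren{i}$ for $1\leq i\leq j-1$, while $A\paren{j}\geq 0$ is equivalent to $(I-j)a\paren{j}\geq a\paren{\geq j+1}_\Sigma$. Splitting off the indices $i=j$ and $i>j$, I compute
\[(I-m)\,a\paren{j} - a\paren{\geq m+1}_\Sigma = \sum_{i=m+1}^{j-1}\bigl(a\paren{j}-a\paren{i}\bigr) + \bigl((I-j)a\paren{j}-a\paren{\geq j+1}_\Sigma\bigr) \geq 0,\]
so the $\gamma(t)$-coefficient is nonnegative on every sub-interval, and $\theta\paren{j}$ is nondecreasing on each. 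Continuity of $\theta\paren{j}$ then promotes this piecewise monotonicity to monotonicity on all of $[0,\tau\paren{j}]$, and $\theta\paren{j}$ is constant on $(\tau\paren{j},T]$, yielding the claim.

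The main subtlety is that $\mu$ itself need not be monotone (as illustrated in Figure~\ref{figure:drift}), so individual sub-interval coefficients of $\gamma(t)$ could a priori carry different signs. The content of the argument is that the rank-based ordering from Lemma~\ref{theta_sign_lem}, combined with the identity $(I-j+1)A\paren{j}=(I-j)a\paren{j}-a\paren{\geq j+1}_\Sigma$, exactly balances the contributions of the later-stopping agents so that every sub-interval coefficient inherits the sign of $A\paren{j}$.
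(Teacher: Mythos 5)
Your proof is correct and follows essentially the same route as the paper's: continuity via Lemma~\ref{mu_conti_lem}, then the piecewise formula for $\theta\paren{j}$ on $[\tau\paren{m},\tau\paren{m+1})$, with Lemma~\ref{theta_sign_lem} and the telescoping identity $(I-m)a\paren{j}-a\paren{\geq m+1}_\Sigma=\sum_{i=m+1}^{j-1}(a\paren{j}-a\paren{i})+(I-j)a\paren{j}-a\paren{\geq j+1}_\Sigma$ showing every sub-interval coefficient has the sign of $A\paren{j}$. The only (harmless) differences are that you make the $\tau\paren{j}=0$ case and the $m=0$ sub-interval explicit.
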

\begin{proof}
Let $1\leq j\leq I-1$ be given. The continuity of $\theta\paren{j}$ is given by \eqref{def:theta} and Lemma~\ref{mu_conti_lem}.

 For $0\leq i \leq j-1$, by the definitions of $\mu$ in \eqref{def:mu} and $\theta\paren{j}$ in \eqref{def:theta}, we have that
\begin{equation}\label{eqn:theta-formula}
  \theta\paren{j} = \theta\paren{j}_{0-}+\gamma\left(a\paren{j}-\frac{a\paren{\geq i+1}_\Sigma}{I-i}\right) - \sum_{k=1}^i \frac{1}{I-k} \gamma(\tau\paren{k})A\paren{k}
  \quad \text{ on } [\tau\paren{i}, \tau\paren{i+1}).
\end{equation}
Since $\theta\paren{j}$ is constant on $[\tau\paren{j}, T]$, we prove that $\theta\paren{j}$ is monotone on $[0,\tau\paren{j}]$.
By \eqref{eqn:theta-formula} and $\theta\paren{j}$'s continuity at the stop-trade times, it is enough to show that for any $1\leq j' \leq j-1$, the sign of 
$a\paren{j}-  \frac{a\paren{\geq j+1}_\Sigma}{I-j}$ and
$a\paren{j}-  \frac{a\paren{\geq j'+1}_\Sigma}{I-j'}$ are the same, under the assumption that $\tau\paren{j}>0$. Further assume that 
$A\paren{j}\geq 0$, which is equivalent to assumption that $a\paren{j}-  \frac{a\paren{\geq j+1}_\Sigma}{I-j}\geq 0$. Then by Lemma~\ref{theta_sign_lem}, we have $a\paren{j} \geq a\paren{i}$ for $1\leq i \leq j-1$. Using this, we obtain
\begin{align*}
(I-j')\left(a\paren{j}-  \frac{a\paren{\geq j'+1}_\Sigma}{I-j'}\right)&=
(a\paren{j}-a\paren{j'+1}) + \cdots (a\paren{j}-a\paren{j-1}) + 
(I-j)\left( a\paren{j}-  \frac{a\paren{\geq j+1}_\Sigma}{I-j} \right)\\
&\geq 0.
\end{align*}
Thus, $\theta\paren{j}$ is monotone increasing.  The case when $A\paren{j}\leq 0$ is handled analogously.
\end{proof}

\begin{lemma}\label{Aa_eq_lem}
Let $1\leq j \leq I-2$ and $0\leq m \leq I-j-2$. Then,
\begin{align}
\frac{I-j+1}{I-j}A\paren{j} +\sum_{k=j+1}^{j+m} \frac{A\paren{k}}{I-k}=
a\paren{j} - \frac{a\paren{\geq j+m+1}_\Sigma}{I-j-m}. \label{sum_identity}
\end{align}
\end{lemma}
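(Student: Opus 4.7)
\textbf{Proof plan for Lemma~\ref{Aa_eq_lem}.} The identity is purely algebraic and involves no analytic content, so the natural approach is induction on $m$, with $j$ held fixed throughout. The only ingredients needed are the defining relations
$$
A\paren{k}=a\paren{k}-\frac{1}{I-k+1}\,a\paren{\geq k}_\Sigma,\qquad a\paren{\geq k}_\Sigma = a\paren{k}+a\paren{\geq k+1}_\Sigma,
$$
for $1\leq k\leq I-2$, recorded in \eqref{various:def}.

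For the base case $m=0$, the summation in \eqref{sum_identity} is empty, so one just substitutes the definition of $A\paren{j}$ and uses $a\paren{\geq j}_\Sigma=a\paren{j}+a\paren{\geq j+1}_\Sigma$ to rewrite
$$
\frac{I-j+1}{I-j}A\paren{j}=\frac{I-j+1}{I-j}a\paren{j}-\frac{1}{I-j}\bigl(a\paren{j}+a\paren{\geq j+1}_\Sigma\bigr)=a\paren{j}-\frac{a\paren{\geq j+1}_\Sigma}{I-j},
$$
which is exactly the right-hand side at $m=0$.

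For the inductive step, assume \eqref{sum_identity} holds for some $m$ with $0\leq m\leq I-j-3$. Subtracting the two sides at consecutive values of $m$, the claim reduces to verifying the single identity
$$
\frac{A\paren{j+m+1}}{I-j-m-1}\;=\;\frac{a\paren{\geq j+m+1}_\Sigma}{I-j-m}\;-\;\frac{a\paren{\geq j+m+2}_\Sigma}{I-j-m-1}.
$$
I would expand $A\paren{j+m+1}=a\paren{j+m+1}-\frac{1}{I-j-m}\,a\paren{\geq j+m+1}_\Sigma$ on the left, and apply $a\paren{\geq j+m+2}_\Sigma=a\paren{\geq j+m+1}_\Sigma-a\paren{j+m+1}$ on the right; after combining the two $a\paren{\geq j+m+1}_\Sigma$ terms over the common denominator $(I-j-m)(I-j-m-1)$, both sides collapse to $\frac{a\paren{j+m+1}}{I-j-m-1}-\frac{a\paren{\geq j+m+1}_\Sigma}{(I-j-m)(I-j-m-1)}$.

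There is no real obstacle here; the only point that requires mild care is keeping track of which index range applies. In particular, one must ensure $j+m+1\leq I-2$ so that the definition $A\paren{j+m+1}=a\paren{j+m+1}-\frac{1}{I-j-m}a\paren{\geq j+m+1}_\Sigma$ is being used (rather than the special-case definition of $A\paren{I-1}$). This is guaranteed by the hypothesis $0\leq m\leq I-j-2$, which restricts attention to indices $k=j+1,\ldots,j+m\leq I-2$ in the summation on the left of \eqref{sum_identity}.
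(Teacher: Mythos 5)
Your proof is correct, and it is essentially the paper's argument in a different wrapper: the paper writes out the telescoping identities $\frac{I-k+1}{I-k}A\paren{k}=a\paren{k}-a\paren{k+1}+A\paren{k+1}$ for $k=j,\ldots,j+m$ and sums them, which is exactly the unrolled form of your induction on $m$, with your one-step identity $\frac{A\paren{j+m+1}}{I-j-m-1}=\frac{a\paren{\geq j+m+1}_\Sigma}{I-j-m}-\frac{a\paren{\geq j+m+2}_\Sigma}{I-j-m-1}$ playing the role of the paper's single telescoping step. Your index bookkeeping is also fine (and, as a minor aside, the worry about $A\paren{I-1}$ is moot since its definition coincides with the general formula $A\paren{k}=a\paren{k}-\frac{1}{I-k+1}a\paren{\geq k}_\Sigma$ at $k=I-1$).
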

\begin{proof}
Observe that
$$
  a\paren{j} - \frac{a\paren{\geq j+m+1}_\Sigma}{I-j-m} = a\paren{j}-a\paren{j+m+1}+A\paren{j+m+1},
$$
and 
\begin{align*}
\frac{I-j+1}{I-j}A\paren{j}  &= a\paren{j} - a\paren{j+1} + A\paren{j+1},\\
\frac{I-j}{I-j-1}A\paren{j+1}  &= a\paren{j+1} - a\paren{j+2} + A\paren{j+2},\\
&\cdots\\
\frac{I-j-m+1}{I-j-m}A\paren{j+m}  &= a\paren{j+m} - a\paren{j+m+1} + A\paren{j+m+1}.
\end{align*}
We sum the right and left sides above and obtain the desired result.
\end{proof}

The following proposition establishes key properties that allow us to prove that the $Y\paren{j}$ processes satisfy the first-order conditions of the optimization problem \eqref{eqn:optimality}.
\begin{proposition} \label{prop:Y-representation} For each $1\leq j\leq I$,
  \begin{equation}\label{eqn:Y-representation}
    Y\paren{j}_t = \E\left[\int_t^T \kappa(u)\left(\frac{\mu_u}{\kappa(u)}
      + \gamma(u)\left(\tilde a\paren{j}-\theta\paren{j}_{0-}\right)
      - \left(\theta\paren{j}_u - \theta\paren{j}_{0-}\right)\right)du
      \,|\, \sF_t\right], \quad t\in[0,T],
  \end{equation}
  and
  \begin{equation}\label{eqn:reflection}
  \int_0^T \left(\lambda-Y\paren{j}_{t}\right)d\thupj_{t}
  = \int_0^T \left(\lambda+Y\paren{j}_{t}\right)d\thdnj_{t}
  = 0.
\end{equation}
\end{proposition}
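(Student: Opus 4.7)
The plan is to handle the two claims in turn. For \eqref{eqn:Y-representation}, I would first observe that every ingredient of the integrand---namely $\kappa$, $\gamma$, the targets $\tilde a\paren{j}$ (which lie in $\sF_0$ by the definition of $\bF$), the $\sF_0$-measurable stop-trade times $\tau\paren{k}$, and consequently $\mu$ and $\theta\paren{j}$ via \eqref{def:mu} and \eqref{def:theta}---is $\sF_0$-measurable. Hence the conditional expectation collapses to the integrand itself, and the assertion becomes a pathwise deterministic identity.

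For $t\in[0,\tau\paren{j}]$, the first branch of \eqref{def:theta} makes the integrand identically zero on $[0,\tau\paren{j}]$, so the integral over $[t,T]$ equals that over $[\tau\paren{j},T]$; Theorem~\ref{thm:tau-ordering} together with \eqref{eqn:Y-def} similarly gives $Y\paren{j}_t=Y\paren{j}_{\tau\paren{j}}$ there. So it suffices to verify the identity on $[\tau\paren{j},T]$. Since both sides vanish at $T$, I would differentiate and match on each sub-interval. For $t\in(\tau\paren{j+m},\tau\paren{j+m+1})$ with $0\le m\le I-j-2$, differentiating \eqref{eqn:Y-def} yields
\begin{equation*}
-\frac{1}{\kappa(t)}\frac{dY\paren{j}_t}{dt}=\frac{I-j+1}{I-j}A\paren{j}\bigl(\gamma(t)-\gamma(\tau\paren{j})\bigr)+\sum_{k=j+1}^{j+m}\frac{A\paren{k}}{I-k}\bigl(\gamma(t)-\gamma(\tau\paren{k})\bigr).
\end{equation*}
Expanding $\mu_t/\kappa(t)-\mu_{\tau\paren{j}}/\kappa(\tau\paren{j})$ via \eqref{def:mu} (with Lemma~\ref{mu_conti_lem} unambiguously identifying $\mu_{\tau\paren{j}}/\kappa(\tau\paren{j})$) and then adding $(\gamma(t)-\gamma(\tau\paren{j}))a\paren{j}$, the $\gamma(t)$-coefficient of the integrand simplifies to $a\paren{j}-a\paren{\geq j+m+1}_\Sigma/(I-j-m)$, which Lemma~\ref{Aa_eq_lem} identifies with $\tfrac{I-j+1}{I-j}A\paren{j}+\sum_{k=j+1}^{j+m}A\paren{k}/(I-k)$; the constant-in-$t$ pieces align via $a\paren{\geq j+1}_\Sigma/(I-j)-a\paren{j}=-\tfrac{I-j+1}{I-j}A\paren{j}$, matching the derivative above. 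The terminal range $[\tau\paren{I-1},T]$, where $\mu$ follows the second branch of \eqref{def:mu}, is handled by the same recipe (and covers the special cases $j=I-1,I$).

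For claim \eqref{eqn:reflection}, Theorem~\ref{thm:tau-ordering} makes $\lambda\pm Y\paren{j}\ge 0$, so both integrands are nonnegative and it suffices to show that $d\thupj$ is supported on $\{Y\paren{j}=\lambda\}$ and $d\thdnj$ on $\{Y\paren{j}=-\lambda\}$. By \eqref{def:theta}, $\theta\paren{j}$ is constant on $(\tau\paren{j},T]$; when $\tau\paren{j}=0$, then $\mu_0=0$ (from \eqref{def:mu}, since $\gamma(0)=0$ and $\gamma(\tau\paren{k})=0$ for any $\tau\paren{k}=0$) renders $\theta\paren{j}$ globally constant and both integrals trivial. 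When $\tau\paren{j}>0$, Proposition~\ref{cor:theta-monotone} makes $\theta\paren{j}$ monotone with direction $\sign(A\paren{j})$; for $A\paren{j}>0$, $d\thdnj\equiv 0$ and $d\thupj$ is supported in $[0,\tau\paren{j}]$, on which Lemma~\ref{theta_sign_lem} gives $Y\paren{j}\equiv\lambda$; the $A\paren{j}<0$ case is symmetric, and $A\paren{j}=0$ likewise forces $\theta\paren{j}$ constant.

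The principal obstacle is the piecewise derivative matching in \eqref{eqn:Y-representation}: correctly expanding $\mu_t/\kappa(t)$ via \eqref{def:mu} in each sub-interval and recognizing, via Lemma~\ref{Aa_eq_lem}, that the resulting $\gamma(t)$-coefficient telescopes from a sum of $A\paren{k}/(I-k)$ back into $a\paren{j}-a\paren{\geq j+m+1}_\Sigma/(I-j-m)$ is the chief bookkeeping burden, compounded by the terminal interval $[\tau\paren{I-1},T]$ requiring a separate but analogous calculation because of the distinct formula for $\mu$ there.
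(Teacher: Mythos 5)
Your proposal is correct and follows essentially the same route as the paper: both reduce \eqref{eqn:Y-representation} to a deterministic identity on $[\tau\paren{j},T]$ via $\sF_0$-measurability, both hinge on Lemma~\ref{Aa_eq_lem} to match the $\gamma(t)$-coefficients piecewise, and both prove \eqref{eqn:reflection} by the same case split ($\tau\paren{j}=0$ trivial; otherwise Lemma~\ref{theta_sign_lem} pins $Y\paren{j}\equiv\pm\lambda$ on the support of the monotone increments from Proposition~\ref{cor:theta-monotone}). The only cosmetic difference is that you differentiate and match on each sub-interval (using agreement at $T$ and continuity of $Y\paren{j}$), whereas the paper computes the integrand in closed form with the $t\vee\tau\paren{k}$ notation and integrates directly.
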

\begin{proof} Let $1\leq j\leq I-2$ be given.  For $j\in \{I-1, I\}$, the calculations are similar, though simpler, and we do not include them here.

For $t\in[\tau\paren{j}, T]$, either $t\in[\tau\paren{j+m}, \tau\paren{j+m+1})$ for some $0\leq m \leq I-j-2$ or $t\in[\tau\paren{j+m}, T]$ for $m=I-j-1$.  In either case, we have
$$
  \theta\paren{j}_t = \theta\paren{j}_{\tau\paren{j}}
  = \theta\paren{j}_{0-} + \frac{I-j+1}{I-j}\gamma(\tau\paren{j})A\paren{j} 
    - \sum_{k=1}^j \frac{\gamma(\tau\paren{k})A\paren{k}}{I-k},
$$
and
$$
  \frac{\mu_t}{\kappa(t)} = -\gamma(t)\frac{a\paren{\geq j+m+1}_\Sigma}{I-j-m}
    - \sum_{k=1}^{j+m} \frac{\gamma(\tau\paren{k})A\paren{k}}{I-k}.
$$
Hence,
\begin{align*}
  \frac{\mu_t}{\kappa(t)} &+ \gamma(t)a\paren{j}+\theta\paren{j}_{0-} - \theta\paren{j}_t \\
  &= \gamma(t) \left(a\paren{j}-\frac{a\paren{\geq j+m+1}_\Sigma}{I-j-m}\right)
    - \sum_{k=j+1}^{j+m} \frac{\gamma(\tau\paren{k})A\paren{k}}{I-k} 
    - \frac{I-j+1}{I-j} \gamma(\tau\paren{j})A\paren{j} \\
  &= \gamma(t) \left(\frac{I-j+1}{I-j}A\paren{j}+\sum_{k=j+1}^{j+m}\frac{A\paren{k}}{I-k}\right)
    - \sum_{k=j+1}^{j+m} \frac{\gamma(\tau\paren{k})A\paren{k}}{I-k} 
    - \frac{I-j+1}{I-j} \gamma(\tau\paren{j})A\paren{j} \\
  &= \frac{I-j+1}{I-j}A\paren{j}\left(\gamma(t)-\gamma(\tau\paren{j})\right)
    + \sum_{k=j+1}^{I-2} \frac{A\paren{k}}{I-k}\left(\gamma(t\vee\tau\paren{k})-\gamma(\tau\paren{k})\right),
\end{align*}
where the second equality is due to Lemma~\ref{Aa_eq_lem} and the third equality is due to $t<\tau\paren{k}$ for $k\geq j+m+1$. 
Therefore, for any $t\in[\tau\paren{j}, T]$,
\begin{align*}
  \int_t^T \kappa(u)&\left(\frac{\mu_u}{\kappa(u)}+\gamma(u)a\paren{j} + \theta\paren{j}_{0-}-\theta\paren{j}_u\right)du \\
  &= \int_t^T\kappa(u)\left(\frac{I-j+1}{I-j}A\paren{j}\left(\gamma(u)-\gamma(\tau\paren{j})\right)
    + \sum_{k=j+1}^{I-2} \frac{A\paren{k}}{I-k}\left(\gamma(u\vee\tau\paren{k})-\gamma(\tau\paren{k})\right)\right)du \\
  &= \frac{I-j+1}{I-j}\Gamma\paren{j}_t + \sum_{k=j+1}^{I-2}\frac{\Gamma\paren{k}_t}{I-k} \\
  &= Y\paren{j}_t,
\end{align*}
Since $Y\paren{j}$ and all terms in $\left(\frac{\mu}{\kappa} + a\paren{j}\gamma + \theta\paren{j}_{0-}-\theta\paren{j}\right)$ are $\sF_0$-measurable, we conclude that \eqref{eqn:Y-representation} holds.

Finally, we prove that the reflection condition~\eqref{eqn:reflection} holds.  Suppose that $\tau\paren{j}=0$. Then, Theorem~\ref{thm:tau-ordering} implies $\tau\paren{k}=0$ for $1\leq k \leq j$. Therefore, \eqref{def:mu}-\eqref{def:theta} and $\gamma(0)=0$ produce $\theta\paren{j}_t=\theta\paren{j}_0=\theta\paren{j}_{0-}$ for $t\in [0,T]$. In this case, \eqref{eqn:reflection} trivially holds.

Suppose now that $\tau\paren{j}>0$ and $A\paren{j}\geq 0$. By Lemma~\ref{theta_sign_lem} and Proposition~\ref{cor:theta-monotone}, we have
\begin{align*}
Y\paren{j}_t &= \lambda \quad \textrm{for} \quad t\in [0,\tau\paren{j}],\\
\theta\paren{j}_t &= \theta\paren{j}_T \quad \textrm{for} \quad t\in [\tau\paren{j},T],\\
\thdnj_{t} &=0 \quad \textrm{for} \quad t\in [0,T].
\end{align*}
The above equations produce \eqref{eqn:reflection}. If $\tau\paren{j}>0$ and $A\paren{j}\leq 0$, then \eqref{eqn:reflection} still holds by the same way.
\end{proof}

Next, we work towards our proof of equilibrium existence by studying the individual agents' optimization problems.  We recall that for $1\leq j\leq I$ and $\theta\in\sA\paren{j}$, agent $(j)$'s loss term is given by 
\begin{equation*}\label{def:L-j}
  L^{(j),\theta}_{T} := \frac12\int_0^T \kappa(t)\left(\gamma(t)\left(\tilde a\paren{j}-\theta\paren{j}_{0-}\right) - \left(\theta_t-\theta\paren{j}_{0-}\right)\right)^2dt,
\end{equation*}
while her wealth at $t$ is given by
\begin{equation*}\label{def:wealth-j}
  X^{(j),\theta}_{t} = \theta\paren{j}_{0-}S_{0} + \int_{0}^t \theta_u dS_{u} - \lambda\left(\theta^\uparrow_t + \theta^\downarrow_t\right),
  \quad t\in[0,T].
\end{equation*}
Agent $(j)$ seeks to maximize
\begin{equation*}\label{optimality}
  V\paren{j}(\theta):= \E\left[X^{(j),\theta}_{T} - L^{(j),\theta}_{T}\,\vert\,\sF_{0}\right]
\end{equation*}
over $\theta\in\sA\paren{j}$.

We also recall that $\sigma=(\sigma_t)_{t\in[0,T]}$ is given by the martingale representation of the terminal dividend $D$ by
$$
  D = \E\left[D\right] + \int_0^T \sigma_u dB_u.
$$

\begin{proposition}\label{prop:theta_optimal}
 Assume that the stock has dynamics
  $$
    dS_t = \mu_t dt + \sigma_t dB_t, \quad S_T = D.
  $$
  For $1\leq j\leq I$, the trading strategy $\theta\paren{j}\in\sA\paren{j}$ is optimal for agent $(j)$ in \eqref{eqn:optimality}.
\end{proposition}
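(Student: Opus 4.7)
The plan is a direct concave verification argument: since $L^{(j),\theta}_T$ is a convex quadratic functional of the process $\theta$ and $X^{(j),\theta}_T$ is linear in $\theta$ up to the convex piece $-\lambda(\thup_T+\thdn_T)$, the map $\theta\mapsto V\paren{j}(\theta)$ is concave, so it suffices to produce a supporting hyperplane at $\theta\paren{j}$ — and this is exactly what the first-order information packaged in $Y\paren{j}$ provides. Admissibility $\theta\paren{j}\in\sA\paren{j}$ is easy: $\theta\paren{j}$ is continuous and monotone on $[0,\tau\paren{j}]$ by Proposition~\ref{cor:theta-monotone}, hence of finite variation, and an explicit bound of the form $|\theta\paren{j}_u|\leq C(1+|\tilde a\paren{j}|)$ combined with the independence of $\tilde a\paren{j}$ from $B$ (and hence from $\sigma$) yields the $L^2$ integrability required in the admissibility definition.

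Starting from the pointwise convex inequality
\[
  L^{(j),\theta}_T - L^{(j),\theta\paren{j}}_T \geq -\int_0^T \kappa(t)\big(\gamma(t)(\tilde a\paren{j}-\theta\paren{j}_{0-})-(\theta\paren{j}_t-\theta\paren{j}_{0-})\big)(\theta_t-\theta\paren{j}_t)\,dt,
\]
substituting $dS_u=\mu_u du+\sigma_u dB_u$, and using admissibility to make $\int_0^\cdot (\theta-\theta\paren{j})\sigma\,dB$ a true $\bF$-martingale (so its $\sF_0$-conditional expectation vanishes), I would reduce the optimality claim to
\[
  \E\bigg[\int_0^T (\theta_u-\theta\paren{j}_u)\alpha_u\,du - \lambda\big(\thup_T+\thdn_T-\thupj_T-\thdnj_T\big)\,\bigg|\,\sF_{0}\bigg]\leq 0,
\]
where $\alpha_u := \mu_u + \kappa(u)\big(\gamma(u)(\tilde a\paren{j}-\theta\paren{j}_{0-})-(\theta\paren{j}_u-\theta\paren{j}_{0-})\big)$.

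All ingredients defining $\alpha$ are $\sF_0$-measurable, so Proposition~\ref{prop:Y-representation} collapses to the pathwise identity $Y\paren{j}_t=\int_t^T\alpha_u\,du$; in particular $Y\paren{j}$ is continuous, absolutely continuous in $t$ with derivative $-\alpha$, and satisfies $Y\paren{j}_T=0$. Since $Y\paren{j}$ is continuous of finite variation and $\theta-\theta\paren{j}$ is c\`adl\`ag of finite variation, Stieltjes integration by parts on $[0,T]$ — with both boundary contributions vanishing thanks to $Y\paren{j}_T=0$ and $(\theta-\theta\paren{j})_{0-}=0$, and with no covariation term because $Y\paren{j}$ is continuous — yields
\[
  \int_0^T (\theta_u-\theta\paren{j}_u)\alpha_u\,du = \int_{[0,T]} Y\paren{j}_u\,d(\theta-\theta\paren{j})_u.
\]
Decomposing $d(\theta-\theta\paren{j}) = d\thup-d\thdn-d\thupj+d\thdnj$, the reflection identities \eqref{eqn:reflection} yield $\int_0^T Y\paren{j}_u\,d\thupj_u=\lambda\,\thupj_T$ and $\int_0^T Y\paren{j}_u\,d\thdnj_u=-\lambda\,\thdnj_T$, while $|Y\paren{j}|\leq\lambda$ from Theorem~\ref{thm:tau-ordering} gives $\int_0^T Y\paren{j}_u\,d\thup_u\leq\lambda\,\thup_T$ and $-\int_0^T Y\paren{j}_u\,d\thdn_u\leq\lambda\,\thdn_T$. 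Summing these four estimates bounds the right-hand side above by $\lambda(\thup_T+\thdn_T-\thupj_T-\thdnj_T)$, which exactly cancels the transaction-cost term and gives $V\paren{j}(\theta)\leq V\paren{j}(\theta\paren{j})$. The one technical subtlety is correctly accounting for possible lump-sum trades $\thup_0,\thdn_0>0$ in the integration-by-parts; this is routine because $Y\paren{j}$ is continuous at $0$ and $\theta_{0-}=\theta\paren{j}_{0-}$. The genuine obstacle has already been absorbed into Theorem~\ref{thm:tau-ordering} and Proposition~\ref{prop:Y-representation}, which supply the bound $|Y\paren{j}|\leq\lambda$ and the reflection structure that drive the entire argument.
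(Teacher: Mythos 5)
Your proposal is correct and follows essentially the same route as the paper's proof: admissibility via Proposition~\ref{cor:theta-monotone} and the independence of the targets from $B$, a concavity/completing-the-square step to drop the quadratic remainder, the identification of the first-order term with $-\int(\theta-\theta\paren{j})\,dY\paren{j}$ via Proposition~\ref{prop:Y-representation}, and integration by parts on $[0-,T]$ combined with the reflection identities \eqref{eqn:reflection} and the bound $|Y\paren{j}|\leq\lambda$ from Theorem~\ref{thm:tau-ordering}. The only cosmetic difference is that you phrase the quadratic step as a convex gradient inequality while the paper completes the square explicitly; the resulting estimates are identical.
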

\begin{proof} Let $1\leq j\leq I$ be given. First, we show that $\theta\paren{j}\in\sA\paren{j}$.   Since $\theta\paren{j}$ is $\sF_0$-measurable, it is also $\bF$-adapted.  Proposition~\ref{cor:theta-monotone} proves that $\theta\paren{j}$ is continuous and monotone, so it is c\`adl\`ag and of finite variation.  Finally, $\theta\paren{j}$ is bounded by a constant times $\sup_{1\leq i\leq I}|a_i|$, where we have $\E\left[\left(\sup_{1\leq i\leq I}|a_i|\right)^2\right]<\infty$ by the assumptions that $\E\left[(a_i+\theta_{i,0-})^2\right]<\infty$ and $\theta_{i,0-}$ is constant for all $1\leq i\leq I$.  We recall that by assumption, $\tilde a_i = a_i+\theta_{i,0-}$ is independent of the Brownian motion and $D$ is measurable with respect to the Brownian filtration with $\E[D^2]<\infty$.  Thus, $\theta\paren{j}$ is independent of $\sigma$, $\E\int_0^T\left(\sigma_t\theta\paren{j}_t\right)^2dt <\infty$, $\E\int_0^T\left(\theta\paren{j}_t\right)^2dt<\infty$, and so, $\theta\paren{j}\in\sA\paren{j}$.

Next, we proceed to show that $\theta\paren{j}$ is optimal for agent $(j)$.  Since $\theta, \theta\paren{j}\in\sA\paren{j}$, and by the definition of $\mu$ in \eqref{def:mu}, the integrals and conditional expectations in the calculations below are well-defined. For any $\theta\in\sA\paren{j}$, we have
\begin{align*}
&V\paren{j}(\theta)-V\paren{j}(\theta\paren{j}) \\
&= \lambda\,\E\left[\thupj_{T}+\thdnj_{T}-\thup_{T}-\thdn_{T}\,|\,\sF_{0}\right] \\
&\ \ \ \ \ \ +\E\left[\int_0^T \kappa(u)
\left( \frac12\left((\theta\paren{j}_{u})^2-\theta_u^2\right) 
+ (\theta_u-\theta\paren{j}_{u})\left(\frac{\mu_u}{\kappa(u)} +\gamma(u)a\paren{j}+\theta\paren{j}_{0-}\right)\right)du\,|\,\sF_{0}\right]\\
&= \lambda\,\E\left[\thupj_{T}+\thdnj_{T}-\thup_{T}-\thdn_{T}\,|\,\sF_{0}\right] \\
&\ \ \ \ \ \ +\E\left[\int_0^T \kappa(u)\left(
\left(\theta_u-\theta\paren{j}_{u}\right)\left(\frac{\mu_u}{\kappa(u)}+\gamma(u)a\paren{j} +\theta\paren{j}_{0-}-\theta\paren{j}_{u}\right)-\frac12\left(\theta_u-\theta\paren{j}_{u}\right)^2\right)du
\,|\,\sF_{0}\right],\\
&\leq \lambda\,\E\left[\thupj_{T}+\thdnj_{T}-\thup_{T}-\thdn_{T}
+\frac{1}{\lambda}\int_0^T \kappa(u)
\left(\theta_u-\theta\paren{j}_{u}\right)\left(\frac{\mu_u}{\kappa(u)}+\gamma(u)a\paren{j} +\theta\paren{j}_{0-}-\theta\paren{j}_{u}\right)du\,|\,\sF_{0}\right] \\
&= \lambda\,\E\left[\thupj_{T}+\thdnj_{T}-\thup_{T}-\thdn_{T} - \frac{1}{\lambda}\int_0^T \left(\theta_u-\theta\paren{j}_{u}\right)dY\paren{j}_{u}\,|\,\sF_{0}\right],
\end{align*}
where the last equality is due to \eqref{eqn:Y-representation}. By integration by parts, the above is rewritten as
\begin{align}
& \E\left[\lambda\thupj_{T}+\lambda\thdnj_{T}-\lambda\thup_{T}-\lambda\thdn_{T} -Y\paren{j}_{T}\left(\theta_T-\theta\paren{j}_{T}\right) + Y\paren{j}_{0}\left(\theta\paren{j}_{0-}-\theta\paren{j}_{0-}\right) + \int_{0-}^T Y\paren{j}_{u}d\left(\theta_u-\theta\paren{j}_{u}\right)\,|\,\sF_{0}\right] \nonumber\\
&= \E\left[\int_{0-}^T\left(\lambda-Y\paren{j}_{u}\right)d\thupj_{u}
  +\int_{0-}^T\left(\lambda+Y\paren{j}_{u}\right)d\thdnj_{u} \,\Big| \,\sF_{0}\right] \label{terms-1}\\
&\qquad \qquad\qquad\qquad
+\E\left[  
  \int_{0-}^T\left(Y\paren{j}_{u}-\lambda\right)d\thup_{u}
  +\int_{0-}^T\left(-\lambda-Y\paren{j}_{u}\right)d\thdn_{u} \, \Big| \,\sF_{0}\right],\label{terms-2} 
\end{align}
where we use $Y\paren{j}_{T}=0$ and $\left(\theta\paren{j}_{0-}-\theta\paren{j}_{0-}\right)=0$. The terms in \eqref{terms-1} are zero due to \eqref{eqn:reflection}, and the terms in \eqref{terms-2} are nonpositive due to $\left|Y\paren{j}\right|\leq \lambda$ in Theorem~\ref{thm:tau-ordering}. All in all, we conclude 
\begin{align*}
V\paren{j}(\theta)-V\paren{j}(\theta\paren{j}) \leq 0.
\end{align*}
Therefore, $\theta\paren{j}\in\sA\paren{j}$ is the optimal trading strategy for agent $(j)$.
\end{proof}

We now work towards the proof of our equilibrium existence result, Theorem~\ref{thm:existence-2}.  Lemma~\ref{lem:sum_theta_formula} (below) proves a formula that will be used to prove market clearing.
\begin{lemma}\label{lem:sum_theta_formula}
The following holds:
\begin{align}
\sum_{k=1}^j \theta\paren{k}_{\tau\paren{k}} 
= 
\begin{dcases}
 \sum_{k=1}^{j} \theta\paren{k}_{0-} + (I-j) \sum_{k=1}^{j} \frac{\gamma(\tau\paren{k})A\paren{k}}{I-k}, & 1\leq j \leq I-1, \\
 n, & j=I.
\end{dcases} \label{sum_theta_formula}
\end{align} 
\end{lemma}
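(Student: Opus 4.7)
The plan is to derive a closed-form expression for each individual $\theta\paren{k}_{\tau\paren{k}}$ and then sum. Evaluating the defining formula \eqref{def:theta} at $t=\tau\paren{k}$ and substituting \eqref{def:mu}, I obtain
$$
  \theta\paren{k}_{\tau\paren{k}} = \theta\paren{k}_{0-} + \gamma(\tau\paren{k})\,a\paren{k} + \frac{\mu_{\tau\paren{k}}}{\kappa(\tau\paren{k})}.
$$
For $1\le k\le I-2$, I would use the first branch of \eqref{def:mu} with $j=k$; for $k=I-1$, the second branch applies (by Lemma~\ref{mu_conti_lem} the choice of branch at a stop-trade time is immaterial, so coincident times $\tau\paren{k}=\tau\paren{k+1}$ cause no ambiguity). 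Using the algebraic identities
$$
  a\paren{k}-\frac{a\paren{\geq k+1}_\Sigma}{I-k} = \frac{I-k+1}{I-k}A\paren{k},
  \qquad
  \frac{I-k+1}{I-k} = 1+\frac{1}{I-k},
$$
which follow immediately from \eqref{various:def} and $a\paren{\geq k}_\Sigma=a\paren{k}+a\paren{\geq k+1}_\Sigma$, the $l=k$ term in the remaining summation cancels, yielding
$$
  \theta\paren{k}_{\tau\paren{k}} = \theta\paren{k}_{0-} + \gamma(\tau\paren{k})A\paren{k} - \sum_{l=1}^{k-1}\frac{\gamma(\tau\paren{l})A\paren{l}}{I-l}, \qquad 1\le k\le I-1.
$$

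Next, I would sum this identity for $k=1,\ldots,j$ with $1\le j\le I-1$ and interchange the order of summation in the resulting double sum. The coefficient of $\gamma(\tau\paren{l})A\paren{l}$ becomes $1-\tfrac{j-l}{I-l}=\tfrac{I-j}{I-l}$ for $1\le l\le j-1$, and equals $1=\tfrac{I-j}{I-j}$ at $l=j$. This is precisely the first branch of \eqref{sum_theta_formula}.

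For the case $j=I$, I would compute $\theta\paren{I}_{\tau\paren{I}}$ separately. Since $\tau\paren{I}=\tau\paren{I-1}$, the second branch of \eqref{def:mu} applies, and combining with $a\paren{I}-\tfrac12 a\paren{\geq I-1}_\Sigma=A\paren{I}=-A\paren{I-1}$ gives
$$
  \theta\paren{I}_{\tau\paren{I}} = \theta\paren{I}_{0-} - \gamma(\tau\paren{I-1})A\paren{I-1} - \sum_{k=1}^{I-2}\frac{\gamma(\tau\paren{k})A\paren{k}}{I-k}.
$$
Adding this to the already-established identity at $j=I-1$: the $k=I-1$ term in the $j=I-1$ sum equals $\gamma(\tau\paren{I-1})A\paren{I-1}$ and cancels the first $A\paren{I-1}$-term above, and the two remaining truncated sums cancel as well, leaving $\sum_{k=1}^I\theta\paren{k}_{0-}$. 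Because the rank-based ordering is a permutation of the alphabetic one and initial market clearing gives $\sum_{i=1}^I\theta_{i,0-}=n$, this equals $n$, as required.

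The only point requiring care is evaluating $\mu/\kappa$ at the stop-trade times, where coincident $\tau\paren{k}=\tau\paren{k+1}$ could in principle create branch-selection ambiguity in \eqref{def:mu}; Lemma~\ref{mu_conti_lem} removes this concern. Beyond this, the argument is pure algebraic bookkeeping—summation rearrangement and telescoping cancellations—so I do not anticipate any genuine obstacle.
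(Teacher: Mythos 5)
Your proof is correct and follows essentially the same route as the paper: both rest on the per-agent identity $\theta\paren{k}_{\tau\paren{k}} = \theta\paren{k}_{0-} + \gamma(\tau\paren{k})A\paren{k} - \sum_{l=1}^{k-1}\frac{\gamma(\tau\paren{l})A\paren{l}}{I-l}$ (the paper's equation \eqref{theta_k_formula}, obtained via Lemma~\ref{mu_conti_lem} exactly as you do), and both finish the $j=I$ case with $\tau\paren{I-1}=\tau\paren{I}$, $A\paren{I}=-A\paren{I-1}$, and $\sum_k\theta\paren{k}_{0-}=n$. The only cosmetic difference is that you aggregate by interchanging the order of a double sum where the paper runs a forward induction on $j$; the content is identical.
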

\begin{proof}
For $1\leq k\leq I-1$, by the definition of $\theta\paren{k}$ in \eqref{def:theta}, we have
\begin{align}
\theta\paren{k}_{\tau\paren{k}} &= \theta\paren{k}_{0-}+\frac{\mu_{\tau\paren{k}}}{\kappa(\tau\paren{k})} + \gamma(\tau\paren{k})a\paren{k} \nonumber \\
&=
 \theta\paren{k}_{0-} + \gamma(\tau\paren{k}) A\paren{k} - \sum_{i=1}^{k-1} \frac{\gamma(\tau\paren{i})A\paren{i}}{I-i}, \label{theta_k_formula}
\end{align}
where we use Lemma~\ref{mu_conti_lem} to obtain the second equality. We can easily see that \eqref{sum_theta_formula} holds for $j=1$ due to \eqref{theta_k_formula} with $k=1$. Now, as an induction hypothesis, suppose that \eqref{sum_theta_formula} holds for $1\leq j \leq I-2$. Then, by using \eqref{theta_k_formula}, we obtain
\begin{align}
\sum_{k=1}^{j+1} \theta\paren{k}_{\tau\paren{k}} 
&= \theta\paren{j+1}_{\tau\paren{j+1}} +   \sum_{k=1}^{j} \theta\paren{k}_{0-} + (I-j) \sum_{k=1}^{j} \frac{\gamma(\tau\paren{k})A\paren{k}}{I-k}\\
&= \sum_{k=1}^{j+1} \theta\paren{k}_{0-} + (I-j-1) \sum_{k=1}^{j+1} \frac{\gamma(\tau\paren{k})A\paren{k}}{I-k}.
\end{align}
Therefore, by induction, we conclude that \eqref{sum_theta_formula} holds for $1\leq j \leq I-1$.

It only remains to prove \eqref{sum_theta_formula} for $j=I$. By \eqref{def:theta}, we have
\begin{align}
\theta\paren{I}_{\tau\paren{I}} &= \theta\paren{I}_{0-}+\frac{\mu_{\tau\paren{I}}}{\kappa(\tau\paren{I})} + \gamma(\tau\paren{I})a\paren{I} \nonumber \\
&=
 \theta\paren{I}_{0-} + \gamma(\tau\paren{I}) A\paren{I} - \sum_{k=1}^{I-2} \frac{\gamma(\tau\paren{k})A\paren{k}}{I-k}, \label{theta_I_formula}
\end{align}
Finally, \eqref{sum_theta_formula} with $j=I-1$ and \eqref{theta_I_formula} complete the proof:
\begin{align*}
\sum_{k=1}^I \theta\paren{k}_{\tau\paren{k}}  
&=  \left(\sum_{k=1}^{I-1} \theta\paren{k}_{0-} + \sum_{k=1}^{I-1} \frac{\gamma(\tau\paren{k})A\paren{k}}{I-k} \right) +
\left(
 \theta\paren{I}_{0-} + \gamma(\tau\paren{I}) A\paren{I} - \sum_{k=1}^{I-2} \frac{\gamma(\tau\paren{k})A\paren{k}}{I-k}
\right)\\
&=  \sum_{k=1}^{I} \theta\paren{k}_{0-} + \gamma(\tau\paren{I-1}) A\paren{I-1} + \gamma(\tau\paren{I}) A\paren{I} \\
&=n,
\end{align*}
where the last equality is due to $\tau\paren{I-1}=\tau\paren{I}$ and $A\paren{I-1} =-A\paren{I}$.
\end{proof}

Finally, we have all of the pieces in place to prove the equilibrium existence result, Theorem~\ref{thm:existence-2}.
\begin{proof}[Proof of Theorem~\ref{thm:existence-2}]
First, we notice from the definition of $S$ in \eqref{def:S} that $S_T = \E[D]+\int_0^T \sigma_u dB_u = D$. Since the optimality of $\theta\paren{j}$ is already checked in Proposition~\ref{prop:theta_optimal}, to complete the proof, it only remains to check the market clearing condition:
\begin{align}
\sum_{k=1}^I \theta\paren{k}_{t} =n, \quad t\in [0,T].
\end{align}

Suppose that $0\leq j \leq I-1$ and $t\in[\tau\paren{j},\tau\paren{j+1})$. By \eqref{def:mu} and \eqref{def:theta}, we have
\begin{align}
\theta\paren{k}_t = \theta\paren{k}_{0-} - \gamma(t) \frac{a\paren{\geq j+1}_\Sigma}{I-j} - \sum_{i=1}^j  \frac{\gamma(\tau\paren{i})A\paren{i}}{I-i} + \gamma(t) a\paren{k}, \quad j+1\leq k \leq I. \label{theta_k_t_formula}
\end{align}
Then, we obtain
\begin{align*}
\sum_{k=1}^I \theta\paren{k}_{t}  &= \sum_{k=1}^j \theta\paren{k}_{\tau\paren{k}} + \sum_{k=j+1}^I \theta\paren{k}_t \\
&=  \sum_{k=1}^{I} \theta\paren{k}_{0-} + (I-j) \sum_{k=1}^{j} \frac{\gamma(\tau\paren{k})A\paren{k}}{I-k}    
+ \gamma(t) \sum_{k=j+1}^I a\paren{k} - \gamma(t) a\paren{\geq j+1}_\Sigma - (I-j)\sum_{i=1}^j \frac{\gamma(\tau\paren{i})A\paren{i}}{I-i}\\
&= \sum_{k=1}^{I} \theta\paren{k}_{0-} = n, 
\end{align*}
where the first equality is due to \eqref{def:theta} and Theorem~\ref{thm:tau-ordering}, and the second equality is due to Lemma~\ref{lem:sum_theta_formula} and \eqref{theta_k_t_formula}.
  
Finally, for $t\in [\tau\paren{I-1},T]$, we have
\begin{align}
\sum_{k=1}^I \theta\paren{k}_{t}  = \sum_{k=1}^I \theta\paren{k}_{\tau\paren{k}}= n,  \label{clear_last_interval} 
\end{align}
where the first equality is due to \eqref{def:theta} and Theorem~\ref{thm:tau-ordering}, and the second equality is due to Lemma~\ref{lem:sum_theta_formula}. 
\end{proof}

\section{Analyzing equilibrium outcomes}\label{section:outcomes}
How does the equilibrium stock drift $\mu$ depend on the transaction cost $\lambda$?  In Noh and Weston~\cite{NW21MAFE}, the two-agent economy means that $\mu$ does not vary with $\lambda$.  In Gonon et.\,al.~\cite{GMKS21MF}'s ergordic-style equilibrium, the equilibrium drift is impacted by $\lambda$, even with a two-agent economy because the two agents have different quadratic penalty term coefficients. In our work, the presence of multiple agents gives way to $\mu$'s dependence on $\lambda$, which is intricate and investigated below.

 First, we  prove that the rank-based ordering $a\paren{j}$ only depends on the relative trading targets, $(a_i)_{1\leq i\leq I}$, not on $\gamma, \kappa, \lambda$.
\begin{proposition}\label{prop:rank-based}
 Let $1\leq k\leq I$ and $\tau\paren{k}>0$. Then, 
 \begin{enumerate}
   \item $A\paren{k}F(\tau\paren{k})=c_k  \lambda$, where $c_k$ is a constant that only depends on $(a_i)_{1\leq i\leq I}$, 
    not on $\gamma, \kappa, \lambda$.
   \item $a\paren{k}$ only depends on $(a_i)_{1\leq i\leq I}$, not on $\gamma$, $\kappa$, or $\lambda$.
   \end{enumerate}
 \end{proposition}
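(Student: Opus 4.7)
The plan is a joint backward induction on $k$ from $I$ down to $1$. The monotonicity $\tau\paren{j}\le\tau\paren{j+1}\le\cdots\le\tau\paren{I-1}$ from Theorem~\ref{thm:tau-ordering} is essential: once $\tau\paren{j}>0$ is assumed, every higher-ranked $\tau\paren{k'}$ is automatically positive, so the inductive hypotheses for (1) and (2) apply cleanly at every $k'>j$.

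\textbf{Base case ($k\in\{I-1,I\}$).} Using $|a_i-\tfrac12(a_l+a_i)|=\tfrac12|a_i-a_l|$, the definition of $\eta\paren{I-1}_{i,l}$ reduces to $\inf\{t:\tfrac12|a_i-a_l|F(t)\le\lambda\}$. By the strict monotonicity of $F$ in \eqref{int_strictly_decreasing}, the $\Argmax$ over $(i,l)$ is the extremal pair of $(a_i)_i$, which is independent of $\gamma$, $\kappa$, $\lambda$. Hence $A\paren{I-1}=\tfrac12(a\paren{I-1}-a\paren{I})$ depends only on $(a_i)_i$, and the hypothesis $\tau\paren{I-1}>0$ together with continuity of $F$ forces $|A\paren{I-1}|F(\tau\paren{I-1})=\lambda$. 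This yields $c_{I-1}=\sign(A\paren{I-1})$ and $c_I=-c_{I-1}$.

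\textbf{Inductive step.} Fix $1\le j\le I-2$, assume the claim for all $k'>j$, and let $\tau\paren{j}>0$. Via the algebraic identity $\frac{I-j+1}{I-j}\bigl(a_i-\frac{a_i+a\paren{\ge j+1}_\Sigma}{I-j+1}\bigr)=b_i$ with $b_i:=a_i-\frac{a\paren{\ge j+1}_\Sigma}{I-j}$, and using the inductive hypothesis to write $\sum_{k'=j+1}^{I-2}\frac{A\paren{k'}}{I-k'}F(\tau\paren{k'})=\lambda\Sigma_j$ for $\Sigma_j:=\sum_{k'=j+1}^{I-2}\frac{c_{k'}}{I-k'}$ (depending only on $(a_i)_i$), the definition \eqref{def:sigma} becomes $\eta\paren{j}_i=\inf\{t:|b_iF(t)+\lambda\Sigma_j|\le\lambda\}$. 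Solving at the boundary gives $F(\eta\paren{j}_i)=\lambda(1-\Sigma_j)/b_i$ when $b_i>0$ and $\lambda(1+\Sigma_j)/(-b_i)$ when $b_i<0$. By the strict monotonicity of $F$, maximizing $\eta\paren{j}_i$ is equivalent to minimizing $F(\eta\paren{j}_i)$, whose explicit form depends only on $(a_i)_i$; therefore $a\paren{j}=a_{i^*}$ is determined by $(a_i)_i$. Substituting $i^*$ back and using $A\paren{j}=\frac{I-j}{I-j+1}b_{i^*}$ gives $A\paren{j}F(\tau\paren{j})=c_j\lambda$ with $c_j=\frac{I-j}{I-j+1}\sign(b_{i^*})\bigl(1-\sign(b_{i^*})\Sigma_j\bigr)$, a function of $(a_i)_i$ alone.

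\textbf{Main obstacle.} The key subtlety is that each $\eta\paren{j}_i$ itself depends on $F$, so its $\Argmax$ over $i\in\sI\paren{j}$ is not obviously independent of $(\gamma,\kappa,\lambda)$. The resolution is that the strict monotonicity of $F$ converts the comparison of $\eta$-values to an equivalent comparison of $F(\eta)$-values, which admit a closed form in $(a_i)_i$ alone. The correct branch of the absolute value is selected via $\sign(A\paren{j})=\sign(b_{i^*})$, itself determined by $(a_i)_i$, and (since $\tau\paren{j}>0$ forces $b_{i^*}\ne 0$) the degenerate case $b_{i^*}=0$ does not arise. Any residual ambiguity from ties in the $\Argmax$ is resolved uniformly by any tie-breaking rule based solely on $(a_i)_i$.
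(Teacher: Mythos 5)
Your proof is correct and follows essentially the same route as the paper's: backward induction in which the inductive hypothesis turns the sum $\sum_{k'>j}\frac{A\paren{k'}}{I-k'}F(\tau\paren{k'})$ into $\lambda\Sigma_j$ with $\Sigma_j$ depending only on $(a_i)_i$, so that strict monotonicity of $F$ reduces the $\Argmax$ of $\eta\paren{j}_i$ to a comparison of closed-form quantities in $(a_i)_i$ alone, and your final formula for $c_j$ agrees with the paper's $c_j=\frac{I-j}{I-j+1}\bigl(\sign(A\paren{j})-\Sigma_j\bigr)$. The only point you leave implicit is that $|\Sigma_j|<1$ (which the paper extracts from Lemma~\ref{lemma:Y-representation} and Theorem~\ref{thm:tau-ordering}), needed to justify that the binding boundary for $b_i>0$ is $+\lambda$ rather than $-\lambda$.
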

 \begin{proof}
  We prove this by backward induction. The cases $k=I, I-1$ are simple, so we start with $k=I-2$.
  Recall that if $\tau\paren{I-2}>0$, then $\tau\paren{I-1}>0$ and $a\paren{\geq I-1}_\Sigma=\max_{1\leq i \leq I}\{ a_i \} + \min_{1\leq i \leq I} \{ a_i \}$. Also,
  $$a\paren{I-2}=\Argmax \left\{a_i: \left| a_i - \frac{a\paren{\geq I-1}_\Sigma}{2} \right|, i\in\sI\paren{I-2}\right\},
  $$
  which implies that $a\paren{I-2}, a\paren{\geq I-1}_\Sigma, A\paren{I-2}$ only depend on $(a_i)_{1\leq i\leq I}$. The first statement also holds by
 $$
 A\paren{I-2}F(\tau\paren{I-2})=\sign(A\paren{I-2})\frac{2}{3}\lambda.
 $$

As the induction hypothesis, suppose that the statements hold for $j+1\leq k \leq I$ and $\tau\paren{j}>0$. Recall that \eqref{eqn:eq-1} and Theorem~\ref{thm:tau-ordering} imply $\left| \sum_{k=j+1}^{I-2} \frac{c_k}{I-k}\right|<1$. Therefore, if $\eta\paren{j}_i>0$, then
  \begin{align*}
\left( a_i- \frac{a\paren{\geq j+1}_\Sigma}{I-j} \right)F(\eta\paren{j}_i) + \lambda\sum_{k=j+1}^{I-2} \frac{c_k}{I-k} = 
 \begin{dcases}
 \lambda, &\textrm{if  }\left( a_i- \frac{a\paren{\geq j+1}_\Sigma}{I-j} \right)>0\\
 -\lambda, &\textrm{if  }\left( a_i- \frac{a\paren{\geq j+1}_\Sigma}{I-j} \right)<0
 \end{dcases} 
 \end{align*}
 Considering $\tau\paren{j}=\max_{i\in \sI\paren{j}}\{\eta\paren{j}_i\}$, we can see that $a\paren{j}$ is chosen as the value of $a_i$ for $i\in\sI\paren{j}$ that maximizes the expression:
 $$
   \frac{1}{1-\sum_{k=j+1}^{I-2} \frac{c_k}{I-k}}\left( a_i- \frac{a\paren{\geq j+1}_\Sigma}{I-j} \right)^+ +\frac{1}{1+\sum_{k=j+1}^{I-2} \frac{c_k}{I-k}} \left( a_i- \frac{a\paren{\geq j+1}_\Sigma}{I-j} \right)^-,
 $$
 which depends only on $(a_i)_{1\leq i\leq I}$, 
  and not on $\gamma$, $\kappa$, or $\lambda$.  Finally, 
  \begin{align*}
 \frac{I-j+1}{I-j}A\paren{j} F(\tau\paren{j}) + \lambda\sum_{k=j+1}^{I-2} \frac{c_k}{I-k} = 
 \begin{dcases}
 \lambda, &\textrm{if  }A\paren{j}>0\\
 -\lambda, &\textrm{if  }A\paren{j}<0
 \end{dcases} 
 \end{align*}
produces
  \begin{align*}
c_j=    \frac{I-j}{I-j+1} \left(\sign(A\paren{j})-\sum_{k=j+1}^{I-2} \frac{c_k}{I-k}\right),
 \end{align*}
 and $c_j$ also only depends on $(a_i)_{1\leq i\leq I}$.
  \end{proof}

In Section 4, we order $(a_i)_{1\leq i\leq I}$ by the backward induction \eqref{def:sigma}-\eqref{various:def} and obtain $(a\paren{j})_{1\leq j\leq I}$. The resulting $(a\paren{j})_{1\leq j\leq I}$ may not be unique: if $  \{i_1, i_2 \}\subset \Argmax\left\{\eta\paren{j}_i:\ i\in\sI\paren{j}\right\}$ in \eqref{choose_j}, then one can choose $a_{i_1}$ or $a_{i_2}$ as $a\paren{j}$. However, the following proposition ensures that the  rium stock price is uniquely determined by our construction in Section 4, regardless of the possible different choices of the ordering $(a\paren{j})_{1\leq j\leq I}$.

\begin{proposition}
By the backward construction \eqref{def:sigma}-\eqref{various:def}, $\mu$ defined in \eqref{def:mu} is uniquely determined.
\end{proposition}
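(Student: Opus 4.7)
The plan is as follows. By \eqref{def:mu}, on each interval $[\tau\paren{j},\tau\paren{j+1})$ (and analogously on $[\tau\paren{I-1},T]$), $\mu_t/\kappa(t)$ splits into the $\gamma(t)$-linear piece $-\gamma(t)\,a\paren{\geq j+1}_\Sigma/(I-j)$ and the $\sF_0$-measurable constant $-\sum_{k=1}^j \gamma(\tau\paren{k})\,A\paren{k}/(I-k)$. The linear piece is already determined by $(a_i)_{1\leq i\leq I}$ alone by Proposition~\ref{prop:rank-based}, so the task reduces to showing the constant is invariant under any valid choice made in \eqref{choose_j} (or in the analogous base case). Because any two legitimate orderings differ by a composition of transpositions within tied subsets of the relevant $\Argmax$, it suffices to establish invariance under a single transposition at a generic step $j^*$ with $\{i_1,i_2\}\subset\Argmax\{\eta\paren{j^*}_i:\,i\in\sI\paren{j^*}\}$.

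Assume $a_{i_1}\neq a_{i_2}$ (the equal case is trivial). Writing $D_i := a_i - a\paren{\geq j^*+1}_\Sigma/(I-j^*)$ and $C := \sum_{k=j^*+1}^{I-2} A\paren{k}F(\tau\paren{k})/(I-k)$, the step-$j^*$ defining expression at $t=\tau\paren{j^*}$ is $D_i F(\tau\paren{j^*}) + C$; by \eqref{eqn:eq-1} and Theorem~\ref{thm:tau-ordering} one has $|C|<\lambda$, so the tie forces $D_{i_1}F(\tau\paren{j^*})+C$ and $D_{i_2}F(\tau\paren{j^*})+C$ to equal $\pm\lambda$ in some order. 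Taking $a\paren{j^*}=a_{i_1}$ without loss of generality, I would evaluate the step-$(j^*-1)$ expression for $i_2$ at $t=\tau\paren{j^*}$: using the explicit form of $A\paren{j^*}$, its coefficient of $F(\tau\paren{j^*})$ algebraically collapses to $D_{i_2}$, so the value becomes exactly $D_{i_2}F(\tau\paren{j^*})+C$. A direct sign analysis of the coefficient of $F(t)$ for $t<\tau\paren{j^*}$ (using $|C|<\lambda$) then shows the step-$(j^*-1)$ expression is strictly monotone on $[0,\tau\paren{j^*}]$ and stays outside $(-\lambda,\lambda)$ on $[0,\tau\paren{j^*})$. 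Hence $\eta\paren{j^*-1}_{i_2}=\tau\paren{j^*}$, forcing $\tau\paren{j^*-1}=\tau\paren{j^*}$ and $a\paren{j^*-1}=a_{i_2}$; by symmetry the same conclusion holds if $a\paren{j^*}=a_{i_2}$ is picked instead.

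The core of the argument is then the algebraic identity
\[
  \frac{A\paren{j^*}}{I-j^*} + \frac{A\paren{j^*-1}}{I-j^*+1}
  = \frac{(a_{i_1}+a_{i_2})(I-j^*) - 2\,a\paren{\geq j^*+1}_\Sigma}{(I-j^*)(I-j^*+2)},
\]
which I would verify by direct computation and which is manifestly symmetric in $\{a_{i_1},a_{i_2}\}$. Because $\tau\paren{j^*-1}=\tau\paren{j^*}$, the pair of summands for $k\in\{j^*-1,j^*\}$ in the constant part of $\mu_t/\kappa(t)$ collapses to $\gamma(\tau\paren{j^*})$ times this symmetric quantity, and so is unaffected by the transposition. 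The same cancellation makes the step-$(j^*-2)$ defining expression identical in the two choices for all $t\leq\tau\paren{j^*}$, so the construction for ranks $k\leq j^*-2$ proceeds verbatim and yields the same $A\paren{k}$ and $\tau\paren{k}$; multi-way ties and ties at multiple steps are handled by iterating pairwise transpositions. The main obstacle I anticipate is the algebraic bookkeeping of the identity above: one must verify that the discrepancies in $A\paren{j^*}$ and $A\paren{j^*-1}$ offset each other exactly, rather than leaving a constant residue that would cascade into shifts of the step-$(j^*-2)$ expression and, through it, into $\mu$.
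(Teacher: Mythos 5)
Your proposal is correct and follows essentially the same route as the paper: the key ingredients are the collapse identity showing that the step-$(j^*-1)$ defining expression for a tied agent evaluates, at $t=\tau^{(j^*)}$, to the step-$j^*$ expression (the paper's \eqref{form2}), the resulting persistence of the tie with equal stop-trade times, and the symmetry in the tied targets of $\sum_k A^{(k)}F(\tau^{(k)})/(I-k)$ over the tied block — your displayed identity is exactly the $l=2$ case of the paper's \eqref{sum_identity2}. The only organizational difference is that you reduce a general $m$-way tie to a composition of adjacent transpositions, whereas the paper runs a single induction over the tied block and invokes the symmetric formula for the whole block sum at once.
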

\begin{proof}
Suppose that $2\leq j \leq I-1$ and $\{i_1, i_2,...,i_m \}= \Argmax\left\{\eta\paren{j}_i:\ i\in\sI\paren{j}\right\}$ for $1\leq  m \leq {j}$ in the $j$-step of the backward induction in \eqref{def:sigma}-\eqref{various:def}. 
Without loss of generality, let $a\paren{j}=a_{i_1}$ and $\tau\paren{j}>0$. Then,
\begin{align}
\tau\paren{j}
    \begin{dcases}
    =\eta\paren{j}_{i}, & i \in \{i_1, i_2,...,i_m \} \\
    >\eta\paren{j}_{i}, & i \in \sI\paren{j}\setminus \{i_1, i_2,...,i_m \}
    \end{dcases}.
\end{align}
The above observation and Theorem~\ref{thm:tau-ordering} produce
\begin{align}
    \left| \left(a_i-\frac{a\paren{\geq j+1}_\Sigma}{I-j}\right) F(\tau\paren{j}) + \sum_{k=j+1}^{I-2} \frac{A\paren{k}}{I-k}\,F(\tau\paren{k}) \right| 
    \begin{dcases}
    = \lambda, & i \in \{i_1, i_2,...,i_m \} \\
    <\lambda,  & i \in \sI\paren{j}\setminus \{i_1, i_2,...,i_m \} 
    \end{dcases}. \label{form1}
\end{align}
Using the identity $\frac{a\paren{\geq j}_\Sigma}{I-j+1}= \frac{a\paren{\geq j+1}_\Sigma}{I-j}+ \frac{A\paren{j}}{I-j}$, we obtain that for any $a_i$,
\begin{equation}
\begin{split}\label{form2}
  & \left| \left(a_i-\frac{a\paren{\geq j+1}_\Sigma}{I-j}\right) F(\tau\paren{j}) + \sum_{k=j+1}^{I-2} \frac{A\paren{k}}{I-k}\,F(\tau\paren{k}) \right| \\
   &=   \left| \frac{I-j+2}{I-j+1} \left(a_i-\frac{a_i + a\paren{\geq j}_\Sigma}{I-j+2}\right) F(\tau\paren{j}) + \sum_{k=j}^{I-2} \frac{A\paren{k}}{I-k}\,F(\tau\paren{k}) \right|    .
\end{split}
\end{equation}
We combine \eqref{form1}, \eqref{form2}, and Theorem~\ref{thm:tau-ordering} to conclude that
\begin{align}
    \tau\paren{j}=\tau\paren{j-1}
    \begin{dcases}
   =\eta\paren{j-1}_{i}, & i \in \{i_2,...,i_m \} \\
    >\eta\paren{j-1}_{i}, & i \in \sI\paren{j-1}\setminus \{i_2,...,i_m \}
    \end{dcases}.
\end{align}
By above, without loss of generality, we set $a\paren{j-1}=a_{i_2}$. As an induction hypothesis, suppose that for $1\leq l\leq {m-1}$,
\begin{align}
a\paren{j-k+1}&=a_{i_k}, \quad 1\leq k \leq l,\\
    \tau\paren{j}&=\tau\paren{j-1}=\cdots = \tau\paren{j-l+1}
    \begin{dcases}
   =\eta\paren{j-l+1}_{i}, & i \in \{i_l,...,i_m \} \\
    >\eta\paren{j-l+1}_{i}, & i \in \sI\paren{j-l+1}\setminus \{i_l,...,i_m \}
    \end{dcases}. \label{tau_same}
\end{align}
Suitable rearrangement of \eqref{sum_identity} produces
\begin{align}
\sum_{k=j-l+1}^j \frac{A\paren{k}}{I-k} =\frac{a\paren{\geq j-l+1}_\Sigma}{I-j+l} - \frac{a\paren{\geq j+1}_\Sigma}{I-j}. \label{sum_identity2}
\end{align}
Using \eqref{tau_same} and \eqref{sum_identity2}, we obtain that for any $a_i$,
\begin{equation}
\begin{split}\label{form3}
  & \left| \left(a_i-\frac{a\paren{\geq j+1}_\Sigma}{I-j}\right) F(\tau\paren{j}) + \sum_{k=j+1}^{I-2} \frac{A\paren{k}}{I-k}\,F(\tau\paren{k}) \right| \\
   &=   \left| \frac{I-j+l+1}{I-j+l} \left(a_i-\frac{a_i + a\paren{\geq j-l+1}_\Sigma}{I-j+l+1}\right) F(\tau\paren{j}) + \sum_{k=j-l+1}^{I-2} \frac{A\paren{k}}{I-k}\,F(\tau\paren{k}) \right|    .
\end{split}
\end{equation}
We combine \eqref{form1}, \eqref{form3}, and Theorem~\ref{thm:tau-ordering} to conclude that
\begin{align}
    \tau\paren{j}&=\tau\paren{j-1}=\cdots = \tau\paren{j-l}
    \begin{dcases}
   =\eta\paren{j-l}_{i}, & i \in \{i_{l+1},...,i_m \} \\
    >\eta\paren{j-l}_{i}, & i \in \sI\paren{j-l}\setminus \{i_{l+1},...,i_m \}
    \end{dcases},
\end{align}
and without loss of generality, we set $a\paren{j-l}=a_{i_{l+1}}$. 

Therefore, by induction, we conclude that the agents $i_1,...,i_m$ stop trading at the same time $\tau\paren{j}$, regardless of the rearrangement choice 
$\{a\paren{j},a\paren{j-1},...,a\paren{j-m+1}\}$ of $\{a_{i_1},a_{i_2},...,a_{i_m}\}$.

To complete the proof of the uniqueness of $\mu$ in \eqref{def:mu}, it only remains to show that the rearrangement choice 
$\{a\paren{j},a\paren{j-1},...,a\paren{j-m+1}\}$ of $\{a_{i_1},a_{i_2},...,a_{i_m}\}$
does not affect the value of $\tau\paren{j-m}$. Indeed, this can be checked by the following observation:
\begin{align*}
 \sum_{k=j-m+1}^{I-2} \frac{A\paren{k}}{I-k}\,F(\tau\paren{k}) &=  \sum_{k=j+1}^{I-2} \frac{A\paren{k}}{I-k}\,F(\tau\paren{k}) + \left( \sum_{k=j-m+1}^j \frac{A\paren{k}}{I-k}\right) F(\tau\paren{j})\\
 &= \sum_{k=j+1}^{I-2} \frac{A\paren{k}}{I-k}\,F(\tau\paren{k}) +
  \left(  \frac{\sum_{l=1}^m a_{i_l}}{I-j+m}  - \frac{m \, a\paren{\geq j+1}_\Sigma}{(I-j+m)(I-j)}      \right) F(\tau\paren{j}),
\end{align*}
where the first equality is due to $\tau\paren{j}=...=\tau\paren{j-m+1}$, and the second equality is due to \eqref{sum_identity2} with $l=m$.  Obviously, the above expression does not depend on the rearrangement choice.
\end{proof}

We are primarily interested in how $\mu$ varies with the choice of $\lambda$.  In our model, $\mu$ depends on $\lambda$ in a complex, nonsuccinct way. The $\lambda$-dependence on $\mu$ can best be captured through $\mu$'s derivative, as described in Corollary~\ref{cor:lambda} below.  Corollary~\ref{cor:lambda} shows that the formula for the derivative of $\mu/\kappa$ does not change with $\lambda$, but the time intervals, on which the formulas apply, do change with $\lambda$.

All equilibrium outputs depend on $\lambda$, but $\lambda$ has been fixed up until now.  In what follows, we append $(\lambda)$ to our notation when we need to indicate that the market in question has a transaction cost proportion $\lambda$; e.g., $\mu(\lambda)$, $\tau\paren{j}(\lambda)$, etc.
\begin{corollary}\label{cor:lambda}
  Let $\lambda>0$ and suppose that $\gamma$ is differentiable on $[0,T]$.  Then,
  $$
    \frac{d}{dt}\left[\frac{\mu_t(\lambda)}{\kappa(t)}\right]
    =\begin{dcases}
    -\frac{\gamma'(t)a\paren{\geq j+1}_\Sigma}{I-j}, &t\in(\tau\paren{j}(\lambda),\tau\paren{j+1}(\lambda)), \ 0\leq j\leq I-2,\\
  -\frac{\gamma'(t)a\paren{\geq I-1}_\Sigma}{2},  &t\in(\tau\paren{I-1}(\lambda),T].
  \end{dcases}
  $$
In particular, suppose that the traders are TWAP traders with $\gamma^{\text{TWAP}}(t) := t/T$ for $t\in[0,T]$. Then for $\lambda, \lambda' >0$ and $0\leq j \leq I-1$, we have that $\frac{d}{dt}\left[\frac{\mu_t(\lambda)}{\kappa(t)}\right]$ for  $t\in\left(\tau\paren{j}(\lambda), \tau\paren{j+1}(\lambda)\right)$ is constant and agrees with $\frac{d}{dt}\left[\frac{\mu_t(\lambda')}{\kappa(t)}\right]$ for $t\in\left(\tau\paren{j}(\lambda'), \tau\paren{j+1}(\lambda')\right)$.
\end{corollary}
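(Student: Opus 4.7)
The plan is to differentiate the piecewise formula for $\mu_t(\lambda)/\kappa(t)$ in \eqref{def:mu} termwise on each interval, and then use Proposition~\ref{prop:rank-based} to extract the $\lambda$-independent content in the TWAP case.

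First, I would fix $\lambda>0$ and $0\leq j\leq I-2$. On the open interval $(\tau\paren{j}(\lambda),\tau\paren{j+1}(\lambda))$, the first branch of \eqref{def:mu} gives
\begin{equation*}
  \frac{\mu_t(\lambda)}{\kappa(t)}
  = -\frac{\gamma(t)\, a\paren{\geq j+1}_\Sigma}{I-j}
    - \sum_{k=1}^{j} \frac{\gamma(\tau\paren{k}(\lambda))\, A\paren{k}}{I-k}.
\end{equation*}
The second summand is $\sF_0$-measurable and has no $t$-dependence, so differentiation in $t$ yields $-\gamma'(t)\,a\paren{\geq j+1}_\Sigma/(I-j)$, as claimed. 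The case $t\in(\tau\paren{I-1}(\lambda),T]$ is handled identically from the second branch of \eqref{def:mu}, with $I-j$ replaced by $2$ and $a\paren{\geq j+1}_\Sigma$ replaced by $a\paren{\geq I-1}_\Sigma$.

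For the TWAP specialization, $\gamma'(t)=1/T$ is constant on $[0,T]$, so on each nonempty interval $(\tau\paren{j}(\lambda),\tau\paren{j+1}(\lambda))$ the derivative equals the constant $-a\paren{\geq j+1}_\Sigma/((I-j)T)$. The key observation is that $a\paren{\geq j+1}_\Sigma=\sum_{k=j+1}^{I}a\paren{k}$ only involves the rank-based aggregates. A nonempty interval forces $\tau\paren{j+1}(\lambda)>0$, and Theorem~\ref{thm:tau-ordering} then yields $\tau\paren{k}(\lambda)>0$ for every $k\geq j+1$. Proposition~\ref{prop:rank-based}(2) therefore applies to each $a\paren{k}$ appearing in the sum, asserting that it depends only on $(a_i)_{1\leq i\leq I}$ and not on $\lambda$. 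Consequently $a\paren{\geq j+1}_\Sigma$ is $\lambda$-invariant, and the same constant $-a\paren{\geq j+1}_\Sigma/((I-j)T)$ is also the value of $\frac{d}{dt}[\mu_t(\lambda')/\kappa(t)]$ on $(\tau\paren{j}(\lambda'),\tau\paren{j+1}(\lambda'))$ for any $\lambda'>0$.

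There is no genuine obstacle here, only a minor subtlety: at step $j$ of the backward construction the argmax may admit ties, giving several legitimate labels for $a\paren{j}$. The uniqueness result established in the preceding proposition shows that $\mu$---and hence its derivative---is invariant under every such tie-breaking choice, so the formulas above are unambiguous. One need not worry about the boundary points $\tau\paren{j}(\lambda)$ themselves, since the statement concerns open intervals (and continuity of $\mu/\kappa$ across these points is already recorded in Lemma~\ref{mu_conti_lem}).
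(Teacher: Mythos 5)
Your proposal is correct and follows essentially the same route as the paper: differentiate the piecewise definition of $\mu/\kappa$ in \eqref{def:mu} term by term (the sum over $k\le j$ carries no $t$-dependence), then invoke Proposition~\ref{prop:rank-based} to conclude that the rank-based quantities, and hence $a\paren{\geq j+1}_\Sigma$, do not depend on $\lambda$. Your extra care about nonempty intervals forcing $\tau\paren{k}(\lambda)>0$ so that Proposition~\ref{prop:rank-based} applies is a detail the paper leaves implicit, but it is the same argument.
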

\begin{proof}
  By Proposition~\ref{prop:rank-based}, the rank-based ordering is determined independently of the choice of the transaction cost proportion.  Thus, taking derivatives using the definition of $\mu$ in \eqref{def:mu} yields the desired result.
\end{proof}

For TWAP traders, Corollary \ref{cor:lambda} shows that the two plots of $\mu$ have the same slope on the time intervals corresponding to when the same number of agents are trading.  However, the stop-trade times, and thus the number of agents trading, depend on $\lambda$.  Figure \ref{figure:two-drifts} (below) illustrates Corollary \ref{cor:lambda} by plotting two equilibrium drifts with different $\lambda$ parameters.  We consider the TWAP trading case with $\gamma(t)=\gamma^{\text{TWAP}}(t)= t/T$, $t\in[0,T]$ and have equilibrium inputs that are the same as in Section~\ref{section:ex} with $I=20$, $T=1$, $\kappa(t)=0.1$, $t\in[0,T]$, and trading targets given by \eqref{ai_data}.  The two graphs of equilibrium drift $\mu$ correspond to models with $\lambda=0.1$ and $\lambda=0.2$.
\begin{figure}[h]\label{figure:two-drifts}
\begin{center}
\includegraphics[scale=0.8]{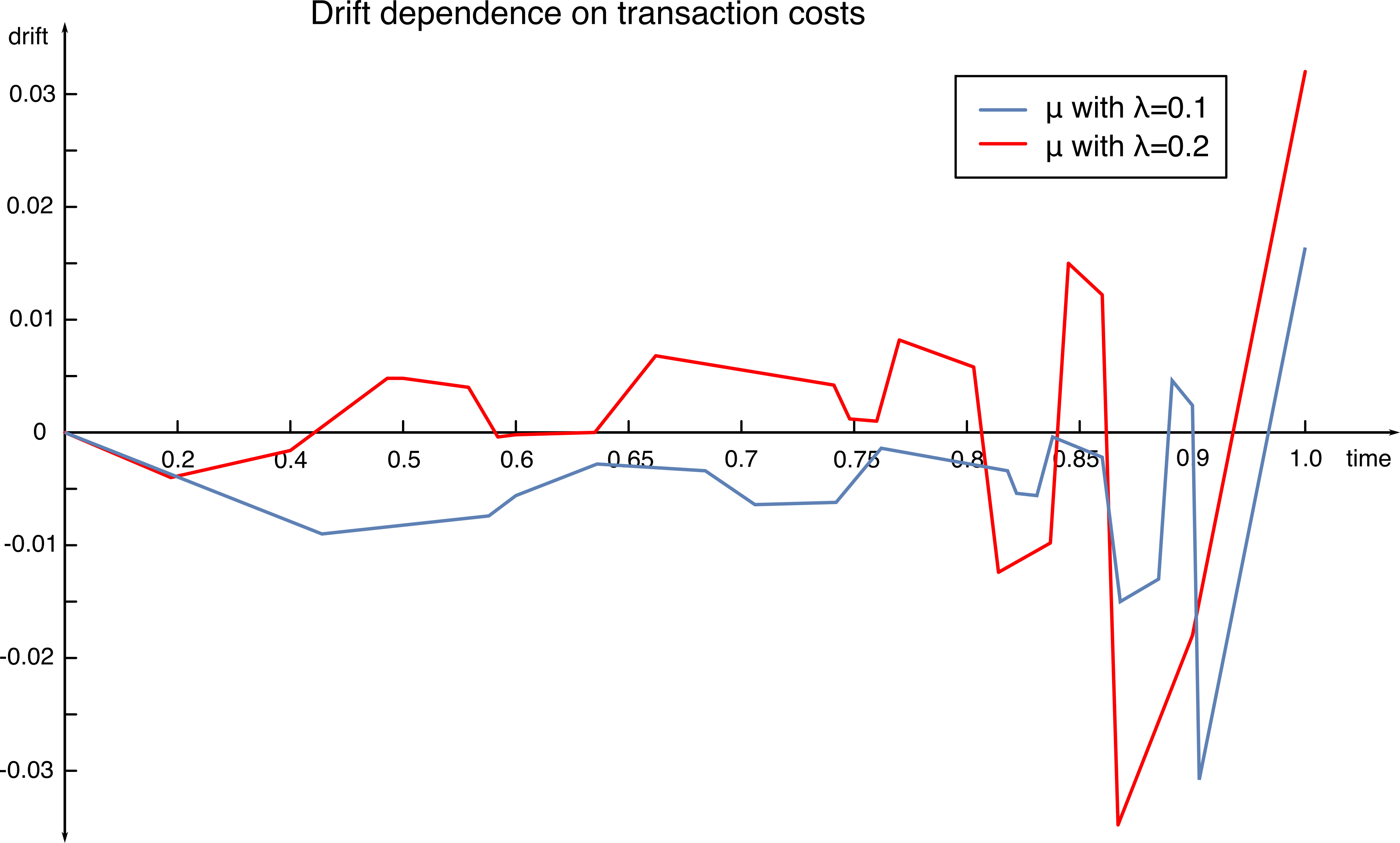}
\end{center}
  \caption{Two graphs of $\mu$ are plotted as functions of time $t\in[0,T]$ with parameters $I=20$, $T=1$, $n=0$, $\theta_{i,0-}=0$ for $1\leq i\leq I$, $\kappa(t)=0.1$ and $\gamma(t)=\gamma^{\text{TWAP}}(t)=t/T$, $t\in[0,T]$, and $(\tilde a_i)_{1\leq i\leq I}$ is given in \eqref{ai_data}.  The transaction costs $\lambda$ differs between the two $\mu$ plots by $\lambda=0.1$ and $\lambda=0.2$.
 The horizontal axis is suitably scaled to better visualize the changes in the graphs of $\mu$.
 }\label{figure:two-drifts}
\end{figure}

By inspecting the graphs plotted in Figure~\ref{figure:two-drifts}, we notice that the drifts have the same slope at time $T$.  (Note:  the time scale of Figure~\ref{figure:two-drifts} changes at time $0.9$, and so the graph of $\mu$ with $\lambda=0.2$ appears to change slope at time $0.9$, though the slope is actually constant.)  Tracing back further in time, $\mu$ with $\lambda=0.1$ and $\lambda=0.2$ have the same slopes on the intervals corresponding to the same number of agents trading.  

Higher levels of transaction costs lead to earlier (lower) stop-trade times, which lead to a wider range of possible $\mu$ values for higher $\lambda$.  A similar phenomenon appeared Gonon et.\,al.~\cite{GMKS21MF}.  In both this work and Gonon et.\,al.~\cite{GMKS21MF}, higher $\lambda$ leads to less trading, which results in changes in the equilibrium drift.  In Gonon et.\,al.~\cite{GMKS21MF}, the equilibrium drift also has a wider range of values and activity as $\lambda$ increases, due to an increase in the range of a doubly-reflected Brownian motion state process that drives the drift.

The impact of transaction costs can also be seen through the equilibrium stock price.  Interestingly, the impact is not monotone but instead is piecewise linear.  Figure~\ref{figure:S0} (below) plots the impact of $\lambda$ on $S_0$.  The equilibrium inputs are the same as in Section~\ref{section:ex} with $I=20$, $T=1$, $\kappa(t)=0.1$ and $\gamma(t) = \gamma^{\text{TWAP}}(t)=t/T$, $t\in[0,1]$, and trading targets given by \eqref{ai_data}.  We vary $\lambda$ in the range $(0, 2]$.
\begin{figure}[h]
\begin{center}
\includegraphics[scale=0.8]{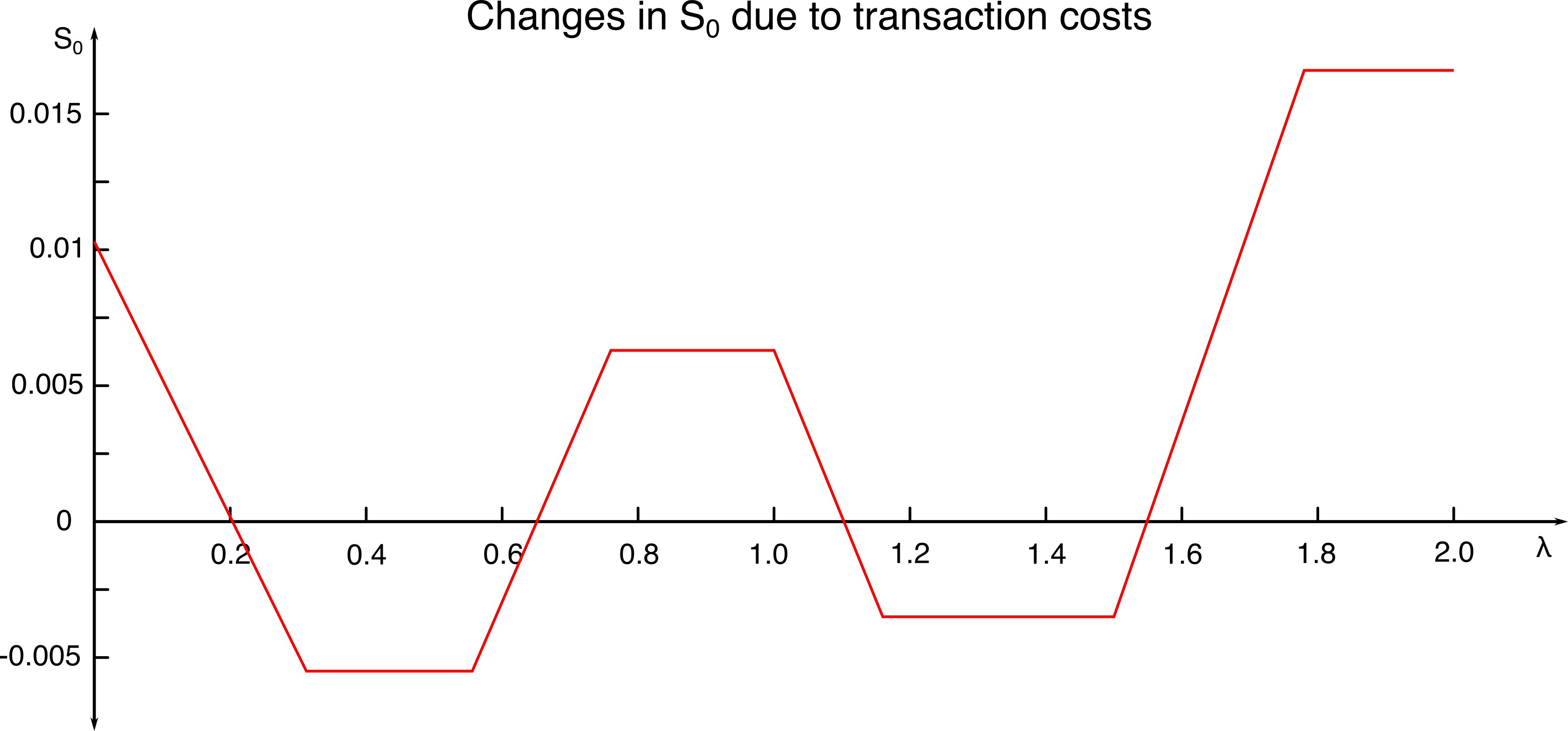}
\end{center}
  \caption{The value of $S_0$ is plotted as a function of $\lambda\in (0, 2]$. The parameters are $I=20, \, T=1, \, \lambda=0.2,\, n=0,\, \theta_{i,0-}=0$ for $1\leq i \leq I$, $\kappa(t)=0.1$ and $\gamma(t)=\gamma^{\text{TWAP}}(t)=t/T$ for $t\in [0,T]$, and $(\tilde a_i)_{1\leq i \leq I}$ is given in \eqref{ai_data}.
 }
 \label{figure:S0}
\end{figure}

Why does the plot of $S_0$ in Figure~\ref{figure:S0} have a piecewise linear shape?  
Corollary~\ref{cor:S0} (below) provides a formula for $S_0$ with insights into Figure~\ref{figure:S0}'s shape.

\begin{corollary} \label{cor:S0}
We have
$$
  S_0 = \E[D] +\int_0^T \frac{\kappa(t)\gamma(t)a\paren{\geq 1}_\Sigma}{I} dt
    - \sum_{j=1}^{I-2}\frac{A\paren{j}F(\tau\paren{j})}{I-j},
$$
and the map $\lambda \mapsto S_0 (\lambda)$ is piecewise linear.
\end{corollary}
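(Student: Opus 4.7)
The plan is to derive the formula directly from $S_0 = \mathbb{E}[D] - \int_0^T \mu_u du$ (which follows from \eqref{def:S} at $t=0$) by rewriting the drift \eqref{def:mu} in a form that exposes the factor $F(\tau\paren{j})$. The key algebraic tool is the telescoping identity
\begin{equation*}
  \frac{a\paren{\geq j+1}_\Sigma}{I-j} = \frac{a\paren{\geq 1}_\Sigma}{I} - \sum_{k=1}^{j} \frac{A\paren{k}}{I-k}, \qquad 0\leq j\leq I-1,
\end{equation*}
which is obtained by iterating the one-step relation $\frac{a\paren{\geq j}_\Sigma}{I-j+1} = \frac{a\paren{\geq j+1}_\Sigma}{I-j} + \frac{A\paren{j}}{I-j}$ already exploited in Lemma~\ref{mu_conti_lem}. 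Substituting this identity into \eqref{def:mu} collapses the drift on each piece $[\tau\paren{j},\tau\paren{j+1})$ for $0\leq j\leq I-2$ into the form
\begin{equation*}
  \frac{\mu_u}{\kappa(u)} = -\frac{\gamma(u)\,a\paren{\geq 1}_\Sigma}{I} + \sum_{k=1}^{j} \frac{A\paren{k}}{I-k}\bigl(\gamma(u)-\gamma(\tau\paren{k})\bigr),
\end{equation*}
and the analogous expression holds on the terminal interval $[\tau\paren{I-1},T]$ with the sum running up to $I-2$.

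Next I would integrate over $[0,T]$ and swap the order of summation and integration. For each fixed $k\in\{1,\ldots,I-2\}$, Theorem~\ref{thm:tau-ordering} guarantees that the term indexed by $k$ is present in the above sum precisely when $u\in[\tau\paren{k},T]$, so Fubini yields
\begin{equation*}
  \int_0^T \kappa(u)\sum_{k}\frac{A\paren{k}}{I-k}\bigl(\gamma(u)-\gamma(\tau\paren{k})\bigr)\,du \;=\; \sum_{k=1}^{I-2}\frac{A\paren{k}}{I-k}\int_{\tau\paren{k}}^T\kappa(u)\bigl(\gamma(u)-\gamma(\tau\paren{k})\bigr)\,du \;=\; \sum_{k=1}^{I-2}\frac{A\paren{k}F(\tau\paren{k})}{I-k},
\end{equation*}
by the definition \eqref{F_def} of $F$. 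Combining this with the $a\paren{\geq 1}_\Sigma$--piece gives the stated identity for $S_0$.

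For the piecewise linearity, I would invoke Proposition~\ref{prop:rank-based}. The first integral $\int_0^T \kappa(u)\gamma(u)a\paren{\geq 1}_\Sigma/I\,du$ is independent of $\lambda$, since $a\paren{\geq 1}_\Sigma=\sum_{i=1}^I a_i$ depends only on the targets. For each summand in $\sum_{j=1}^{I-2}\frac{A\paren{j}F(\tau\paren{j}(\lambda))}{I-j}$, Proposition~\ref{prop:rank-based}(1) gives $A\paren{j}F(\tau\paren{j}(\lambda))=c_j\lambda$ whenever $\tau\paren{j}(\lambda)>0$, while the constant $A\paren{j}F(0)$ applies when $\tau\paren{j}(\lambda)=0$; in both regimes the summand is affine in $\lambda$. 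Since $\lambda\mapsto \tau\paren{j}(\lambda)$ is monotone (larger $\lambda$ relaxes the threshold in \eqref{def:sigma}), the set $\{j:\tau\paren{j}(\lambda)>0\}$ changes only at finitely many critical $\lambda$-values, between which $S_0(\lambda)$ is an affine function of $\lambda$.

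The main obstacle I anticipate is bookkeeping the terminal interval $[\tau\paren{I-1},T]$, where \eqref{def:mu} uses the denominator $2$ rather than $1$. This is resolved by noticing that the telescoping identity at $j=I-2$ reads $\frac{a\paren{\geq I-1}_\Sigma}{2}=\frac{a\paren{\geq 1}_\Sigma}{I}-\sum_{k=1}^{I-2}\frac{A\paren{k}}{I-k}$, so the reformulated drift has exactly the same shape as on the interior pieces, with the inner sum capped at $I-2$; this produces no additional contribution from $k=I-1$, which is consistent with the sum in the statement terminating at $j=I-2$.
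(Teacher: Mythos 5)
Your proposal is correct and follows essentially the same route as the paper: the same telescoping identity $\frac{a\paren{\geq j+1}_\Sigma}{I-j} = \frac{a\paren{\geq 1}_\Sigma}{I} - \sum_{k=1}^{j} \frac{A\paren{k}}{I-k}$ is used to rewrite $\mu$ on each piece, the integral is then computed by swapping summation and integration to produce $\sum_{k=1}^{I-2}\frac{A\paren{k}F(\tau\paren{k})}{I-k}$, and piecewise linearity is deduced from Proposition~\ref{prop:rank-based}. Your extra remarks on the terminal interval and on the finitely many critical $\lambda$-values only make explicit what the paper leaves implicit.
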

\begin{proof}
We observe that for $0\leq j\leq I-2$ and $t\in[0,T]$,
\begin{align}
\frac{\gamma(t) a\paren{\geq j+1}_\Sigma}{I-j} + \sum_{k=1}^j \frac{\gamma(\tau\paren{k})A\paren{k}}{I-k}&=\gamma(t) \left( \frac{a\paren{\geq j+1}_\Sigma}{I-j} + \sum_{k=1}^j \frac{A\paren{k}}{I-k}   \right) -\sum_{k=1}^{j}\frac{A\paren{k}}{I-k}
  \left(\gamma(t)-\gamma(\tau\paren{k})\right) \nonumber \\
&=\frac{\gamma(t)a\paren{\geq 1}_\Sigma}{I}
  -\sum_{k=1}^{j}\frac{A\paren{k}}{I-k}
  \left(\gamma(t)-\gamma(\tau\paren{k})\right), \label{id_for_S0}
\end{align}
where the second equality above is due to \eqref{sum_identity2} with $l=j$.
Using the definition of $\mu$ in \eqref{def:mu} and the identity \eqref{id_for_S0}, we have
\begin{align}
  \mu_t = \kappa(t)\left(-\frac{\gamma(t)a\paren{\geq 1}_\Sigma}{I}
  +\sum_{k=1}^{j\wedge (I-2)}\frac{A\paren{k}}{I-k}
  \left(\gamma(t)-\gamma(\tau\paren{k})\right)\right), \label{mu_new}
\end{align}
for $t\in[0,T]$, where either $t\in[\tau\paren{j},\tau\paren{j+1})$, for some $0\leq j\leq I-2$, or $t\in[\tau\paren{I-1},T]$ for $j=I-1$.

Therefore, the definition of $S$ in \eqref{def:S} and the expression of $\mu$ in \eqref{mu_new} produce
\begin{align*}
  S_0 &- \E[D] = -\int_0^T \mu_t dt \\
 &= - \sum_{j=0}^{I-2}\int_{\tau\paren{j}}^{\tau\paren{j+1}} \mu_t dt - \int_{\tau\paren{I-1}}^{T} \mu_t dt \\
  &= -\sum_{j=0}^{I-2} \int_{\tau\paren{j}}^{\tau\paren{j+1}} \kappa(t)\left(-\frac{\gamma(t)a\paren{\geq 1}_\Sigma}{I}
  +\sum_{k=1}^{j}\frac{A\paren{k}}{I-k}
  \left(\gamma(t)-\gamma(\tau\paren{k})\right)\right)dt \\
  & \quad\quad- \int_{\tau\paren{I-1}}^T \kappa(t)\left(-\frac{\gamma(t)a\paren{\geq 1}_\Sigma}{I}
  +\sum_{k=1}^{I-2}\frac{A\paren{k}}{I-k}
  \left(\gamma(t)-\gamma(\tau\paren{k})\right)\right)dt\\
  &= \int_0^T \frac{\kappa(t)\gamma(t)a\paren{\geq 1}_\Sigma}{I}dt
  - \sum_{k=1}^{I-2} \frac{A\paren{k}}{I-k}\int_{\tau\paren{k}}^T \kappa(t)\left(\gamma(t)-\gamma(\tau\paren{k})\right)dt,
\end{align*}
where we interchange the order of the summations to obtain the last equality.

Finally, the expression above and Proposition~\ref{prop:rank-based} imply that the map $\lambda \mapsto S_0 (\lambda)$ is piecewise linear.
\end{proof}

\bibliographystyle{plain}
\bibliography{finance_bib}

\end{document}